\documentclass[letterpaper, 10 pt, conference]{ieeeconf}  
\IEEEoverridecommandlockouts                              
\usepackage{amsmath,amssymb}
\usepackage{amsthm}
\usepackage{xcolor}
\usepackage{tabularx}
\usepackage{booktabs}
\usepackage{multirow}
\usepackage{graphicx}
\usepackage{subcaption}
\usepackage{cite}
\usepackage{flushend}

\title{\LARGE \bf
Output-Feedback Safe Control of  Discrete-Time\\ Stochastic Systems with Chance Constraints
}

\author{Jianing Zhao, Zhuoting Cai and Xiang Yin
\thanks{All the authors are with the School of Automation and Intelligent Sensing, Shanghai Jiao Tong University, Shanghai 200240, China. {{\tt  E-mail: \{jnzhao,marinasjtu,yinxiang\}@sjtu.edu.cn}.}}
}

\def \RR{\mathbb{R}}
\def \NN{\mathbb{N}}

\def \cov{\text{cov}}
\def \PP{\mathbb{P}}

\def \PP{\mathbb{P}}
\def \EE{\mathbb{E}}
\def \u{\mathbf{u}}
\def \y{\mathbf{y}}

\DeclareMathOperator*{\argmin}{arg\,min}

\newtheorem{problem}{Problem}
\newtheorem{assumption}{Assumption}
\newtheorem{proposition}{Proposition}
\newtheorem{definition}{Definition}
\newtheorem{theorem}{Theorem}
\newtheorem{lemma}{Lemma}
\newtheorem{corollary}{Corollary}
\newtheorem{remark}{Remark}

\begin{document}

\maketitle
\thispagestyle{empty}
\pagestyle{empty}

\begin{abstract}

In this paper, we investigate safety-critical control problem of discrete-time stochastic systems with incomplete information, where safety constraints must be enforced using state estimates obtained from noisy measurements. We develop an output-feedback control barrier function (CBF) framework based on an expectation-based discrete-time barrier condition that explicitly incorporates estimation uncertainty through the evolving belief over the state. To enable real-time implementation, we derive deterministic sufficient conditions that conservatively enforce the expectation-based CBF  by bounding the expectation with computable functions of the belief statistics  using Jensen inequalities. The resulting safety filter is formulated as a tractable optimization problem compatible with standard online controllers. Numerical simulations demonstrate that the proposed output-feedback approach achieves fast online computation while providing reliable safety performance in the presence of process noise and measurement uncertainty.

\end{abstract}

\section{Introduction}
Ensuring safety with formal guarantees is a central challenge in autonomous systems. In recent years, control barrier functions (CBFs), originally introduced in \cite{ames2016control}, have attracted considerable attention as a systematic framework for safe controller synthesis. By introducing a barrier function whose superlevel set characterizes the safe region, CBFs enable the design of control policies that render this set forward invariant, thereby ensuring that the system remains safe over the operation time.

Building on this framework, various extensions have been developed for systems subject to disturbances and uncertainty. For example, input-to-state safe CBFs were proposed in \cite{kolathaya2018input,alan2021safe} to explicitly characterize the effect of perturbations on safety. Beyond the state-feedback setting, observer-based CBF frameworks have also been developed for safe control under partial state information; see, e.g., \cite{agrawal2022safe,wang2024immersion,quan2025observer}. However, these approaches are mainly designed for deterministic systems or systems with bounded non-stochastic disturbances.
 
To address stochastic disturbances with unbounded support, stochastic extensions of CBFs have been developed for continuous-time systems. For example, \cite{clark2021control} proposed a stochastic CBF framework for systems with process and measurement noise, showing that safety can still be characterized under stochastic uncertainty. Along this line, \cite{nishimura2024control} derived sufficient conditions for almost-sure safety and introduced a stochastic zeroing CBF to characterize the probability that a sample path remains within the safe set. Moreover, \cite{wang2021safety} developed stochastic CBFs for nonlinear stochastic systems and used them to synthesize safety-critical controllers under diffusion-driven uncertainty.

However, the aforementioned works are mainly developed for continuous-time systems. In practice, many real-world systems are more naturally modeled and controlled in discrete time due to sampling, scheduling, digital implementation, online optimization, model predictive control, and discrete learning updates. This has motivated the development of discrete-time CBFs for safety-critical control \cite{agrawal2017discrete,zeng2021safety}. For stochastic systems, related discrete-time barrier-function-based results have also been reported, including finite-time probabilistic safety verification via barrier functions \cite{santoyo2019verification}, expectation-based stochastic discrete-time CBFs with Jensen-based reformulations and finite-horizon probability bounds \cite{cosner2023robust}. Nevertheless, these works are still largely limited to state-feedback settings or specific classes of safety constraints, which motivates the study of output-feedback safe control for discrete-time stochastic systems.

In this paper, we study safety-critical control for discrete-time stochastic systems with incomplete state information. To handle the output-feedback setting, we incorporate Kalman filtering into the CBF framework and construct an estimate-based safety function that links safety of the estimated state to that of the true state. On this basis, we develop a tractable deterministic reformulation of the stochastic CBF condition by combining an expectation-based barrier condition with a Jensen-type bound, and further derive explicit probabilistic safety guarantees over a finite horizon. 

It is worth noting that \cite{kishida2024risk} studies a related setting by integrating Kalman filtering, worst-case CVaR, and CBFs. In contrast, our approach is expectation-based and yields a Jensen-based deterministic reformulation. Since the worst-case CVaR method involves a min-max structure, it becomes much more expensive for complex safety constraints, while our approach directly gives a deterministic barrier constraint.

The main contributions are summarized as follows:
\begin{itemize}
    \item First, by formulating an expectation-based CBF condition on the Kalman filter state estimates and relating estimated-state safety to true-state safety, we establish an explicit probabilistic safety bound for output-feedback control. This is in contrast to \cite{kishida2024risk}, where no theoretical safety probability guarantee is provided.
    \item Second, by exploiting the sufficient-statistic property of the Kalman state estimate, we derive conditional moment properties of the estimation error and convert the expectation-based stochastic CBF condition into a tractable Jensen-based deterministic CBF condition.
    \item Finally, compared with \cite{kishida2024risk}, which only considers half-space and ellipsoidal safety constraints, our approach applies to more general barrier functions, provided that the barrier function is twice continuously differentiable, has a uniformly bounded Hessian, and admits a global upper bound. Simulations further show that our approach can attain the same empirical safety probability as \cite{kishida2024risk} with improved computational efficiency.
\end{itemize}

The remaining part is organized as follows. In Section~\ref{sec:problem}, we formulate the problem under investigation and review the expectation-based CBF framework. Section~\ref{sec:main} presents our solution to the output-feedback safe control problem. Section~\ref{sec:simulation} provides simulation studies to validate the proposed approach and compare it with existing related works. Finally, Section~\ref{sec:conclusion} concludes the paper.

\section{Preliminaries}\label{sec:problem}

\subsection{System Model}

Consider a discrete-time stochastic system\vspace{-3pt}
\begin{subequations}\label{eq-sys}
    \begin{align}
        x_{k+1}&=A_kx_k+B_ku_k+w_k\label{eq-state}\\
        y_k&=C_kx_k+v_k\vspace{-3pt}
    \end{align}
\end{subequations}
where $x_k\!\in\!\RR^n$ is the state, $u_k\!\in\!\RR^m$ is the control input, $w_k\!\in\!\RR^{n_w}$ is the disturbance, $y_k\!\in\!\RR^{n_y}$ is the measurement, $v_k\!\in\!\RR^{n_v}$ is the measurement noise at time instant $k\!\in\!\NN$. 
Given a control sequence $\u_{0:K-1}:=u_0u_1\cdots u_{K-1}$, we denote the corresponding system trajectory by\vspace{-3pt}
\begin{equation}
    \xi(x_0,\u_{0:K-1}):=x_0x_1\cdots x_K\notag\vspace{-3pt}
\end{equation}
where each state $x_k$ evolves according to \eqref{eq-state}. Owing to the stochastic disturbances, $\xi(x_0,\u_{0:K-1})$ is a random trajectory. Similarly, we denote by $\y_{0:K}:=y_0y_1\cdots y_K$ the measurement output sequence generated by \eqref{eq-sys}. 

Throughout this paper, we impose the following assumption on the stochastic processes.

\begin{assumption}\label{ass-gaussian}
The disturbances $w_k$ and the measurement noises $v_k$ are mutually independent zero-mean Gaussian white-noise processes with finite second moments.
\end{assumption}

In this paper, we characterize \emph{safety} by a function $h\!:\!\RR^n\!\to\!\RR$. In particular, $h(x_k)\!\geq\! 0$ indicates that the system is safe at time $k$, whereas $h(x_k)\!<\!0$ indicates that the system is unsafe. Since $x_k$ is a random variable due to the presence of stochastic disturbances, we adopt the following notion of \emph{$K$-step Exit Probability} \cite{cosner2023robust} to characterize the probability that the system leaves the safe set over a finite horizon $K$.
\begin{definition}[\textbf{$K$-step Exit Probability}]  
    Given a safety function $h\!:\!\RR^n\!\to\!\RR$ and the random trajectory $\xi(x_0,\u_{0:K-1})$ generated by  system \eqref{eq-sys} under a control input sequence $\u_{0:K-1}$ over horizon $K\!\in\!\NN$, the $K$-step exit probability is defined as\vspace{-3pt}
    \begin{equation}
        P_e^x:=P_e(\xi(x_0,\u_{0:K-1}))=\PP\left[\min_{k\in\{0,\ldots,K\}} h(x_k)< 0 \right].\vspace{-3pt}\notag
    \end{equation}
\end{definition}
Accordingly, the safety probability of the random trajectory $\xi(x_0,\u_{0:K-1})$ is defined as\vspace{-3pt}
\begin{equation}\label{eq-Ps}
    P_s^x:=P_s(\xi(x_0,\u_{0:K-1}))=1-P_e(\xi(x_0,\u_{0:K-1})).\vspace{-3pt}
\end{equation}

With the above preliminaries, now we formulate the main problem under investigation as follows.

\begin{problem}[\textbf{Output-Feedback Safe Control}]
Given system~\eqref{eq-sys} satisfying Assumption~\ref{ass-gaussian} and a nominal controller
$\hat{\mathbf u}_{0:T-1}:=\hat u_0\hat u_1\cdots\hat u_{T-1}$ for a finite horizon \(T\in\mathbb N\), synthesize an output-feedback controller of the form\vspace{-3pt}
\[
u_k=\pi_k(\mathbf y_{0:k},\hat{\mathbf u}_{0:k}), \qquad k=0,\dots,T-1,\vspace{-3pt}
\]
such that $\mathbf u_{0:T-1}:=u_0u_1\cdots u_{T-1}$
stays as close as possible to the nominal controller $\hat{\mathbf u}_{0:T-1}$ while guaranteeing safety over the entire horizon with probability at least \(1-\epsilon\), i.e.,
\begin{subequations}
\begin{align}
\min_{\mathbf u_{0:T-1}} \quad & \sum_{k=0}^{T-1}\|u_k-\hat u_k\|_2^2\\
\text{s.t.}\quad &
P_s(\xi(x_0,\u_{0:T-1}))\geq 1-\epsilon.
\end{align}
\end{subequations}
\end{problem}

\subsection{Expectation-based Control Barrier Function}

To enforce safety for the stochastic systems, we adopt the expectation-based discrete-time stochastic control barrier function framework \cite{cosner2023robust} with slight modifications. We first introduce several preliminary results as follows.

\begin{lemma}[\textbf{Discrete-Time Stochastic CBF} \cite{cosner2023robust}]\label{lem-expectationcbf}
Consider system~\eqref{eq-state} and a continuous function $h\!:\!\RR^n\!\to\!\RR$ satisfying\vspace{-3pt}
\begin{equation}
    \exists M>0,~\forall x\in\RR^n:h(x)\leq M.\vspace{-3pt}
\end{equation}
The function $h$ is called a discrete-time stochastic control barrier function if there exist $\alpha\in(0,1)$, $\delta\leq M(1-\alpha)$ and a control input sequence $\u_{0:K-1}$ such that\vspace{-3pt}
    \begin{equation}
        \EE[h(A_kx_k+B_ku_k)\mid x_k]\geq \alpha h(x_k)+\delta,~\forall x_k\in\RR^n.\vspace{-3pt}
    \end{equation}
Then, for any $K\!\in\!\NN$, the $K$-step exit probability of the closed-loop trajectory $P_e^x$ satisfies
\begin{itemize}
    \item if $\delta<0$, we have\vspace{-3pt}
    \begin{equation}
        P_e^x\leq \left(\frac{M-h(x_0)}{M}\right)\alpha^K+\frac{M(1-\alpha)-\delta}{M}\sum_{i=1}^K\alpha^{i-1}\notag\vspace{-3pt}
    \end{equation}
    \item if $\delta\geq 0$, we have\vspace{-3pt}
    \begin{equation}
         P_e^x \leq 1- \frac{h(x_0)}{M}\left(\frac{M\alpha+\delta}{M}\right)^K\notag\vspace{-3pt}
    \end{equation}
\end{itemize}
\end{lemma}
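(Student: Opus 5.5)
We prove the bound through a supermartingale argument, following Cosner et al. We read the barrier condition as $\EE[h(x_{k+1})\mid x_k]\geq \alpha h(x_k)+\delta$, with the expectation taken over $w_k$ since $x_{k+1}=A_kx_k+B_ku_k+w_k$. The idea is to turn this expected decay condition into a nonnegative process that is a (super)martingale-type quantity, and then apply Ville's maximal inequality, or its finite-horizon version for \emph{$c$-martingales} (Kushner's theorem). Safety is preserved only in probability, so the whole argument reduces to controlling how likely the process is to cross the level corresponding to $h=0$.

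\textbf{Case $\delta<0$.} Define $W_k:=M-h(x_k)\geq 0$, which is well defined because $h\leq M$. The barrier condition gives
\[
\EE[W_{k+1}\mid x_k]\leq \alpha W_k+M(1-\alpha)-\delta .
\]
Set $c:=M(1-\alpha)-\delta\geq 0$. Then $W_k$ satisfies the hypothesis of Kushner's theorem for nonnegative processes with $\EE[W_{k+1}\mid\mathcal F_k]\leq \alpha W_k+c$. Exiting the safe set at some $k\leq K$ means $h(x_k)<0$, which is equivalent to $W_k>M$. We then apply the standard bound
\[
\PP\Big[\max_{k\leq K}W_k\geq \lambda\Big]\leq \frac{W_0\alpha^K+c\sum_{i=1}^K\alpha^{i-1}}{\lambda}
\]
with $\lambda=M$, which gives exactly the first estimate. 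To prove this bound, we form the auxiliary process $V_k:=\alpha^{K-k}W_k+c\sum_{i=1}^{K-k}\alpha^{i-1}$ and check that it is a supermartingale on $\{0,\dots,K\}$. We then stop it at the first passage time $\tau$ of $W$ above $M$ and apply optional stopping, using that $\alpha^{K-\tau}\geq \alpha^K$ together with nonnegativity of the remaining terms. This step, together with the careful accounting of the powers of $\alpha$, is the main technical point.

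\textbf{Case $\delta\geq 0$.} Here we work multiplicatively. Let $\beta:=(M\alpha+\delta)/M\in(0,1]$, which lies in this range because $\delta\leq M(1-\alpha)$. Since $h\leq M$, we have $\alpha h+\delta\geq \beta h$ whenever $h\geq 0$, because $\delta\geq \delta h/M$ there. Hence, on the safe set, $\EE[h(x_{k+1})\mid x_k]\geq \beta h(x_k)$. Consider the process stopped at exit, $\tilde h_k:=h(x_{k\wedge\tau})$, and define $Z_k:=\beta^{-k}\tilde h_k$ with the convention that $Z$ is frozen at exit. Then $Z_k$ is a submartingale that is bounded above by $\beta^{-K}M$.

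Set $Y_k:=\beta^{-K}M-Z_k\geq 0$. This is a nonnegative supermartingale, and we apply Ville's inequality to it. On the exit event, $Y$ reaches at least $\beta^{-K}M$ because $h<0$ there. The inequality therefore gives
\[
P_e^x\leq \frac{\beta^{-K}M-h(x_0)}{\beta^{-K}M}=1-\frac{h(x_0)}{M}\beta^K,
\]
which is the claimed bound.

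The expected obstacle in this case is handling the stopping. After exit, $h$ may be negative, and the inequality $\alpha h+\delta\geq\beta h$ fails for $h<0$. For this reason we freeze the process at the first exit time, which keeps the submartingale property valid on the whole horizon and lets the maximal inequality be applied only up to $K$. Throughout, we assume that $h$ is integrable along trajectories, which follows from continuity and the Gaussian noise of Assumption~\ref{ass-gaussian}, so that all conditional expectations are well defined.
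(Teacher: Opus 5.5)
The paper gives no proof of this lemma. It imports the result from \cite{cosner2023robust}, where it follows from a Kushner-type finite-horizon supermartingale bound applied to $W_k=M-h(x_k)$. The two cases there correspond to $c>M(1-\alpha)$ and $c\le M(1-\alpha)$ for $c:=M(1-\alpha)-\delta$, so your overall route matches. Your treatment of $\delta\ge0$ is correct. Two small corrections in that case. First, $\alpha h+\delta-\beta h=\delta(M-h)/M\ge0$ for every $h\le M$, so the inequality does not fail for $h<0$; freezing at exit is harmless but unnecessary. Second, integrability does not follow from continuity plus Gaussian noise. It follows from $h\le M$ together with the barrier inequality, which forces $\EE[h(x_{k+1})\mid x_k]>-\infty$.

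The genuine gap is in the case $\delta<0$, at exactly the step you flag as the main technical point. The ``standard bound'' you quote is false for an arbitrary nonnegative $W$ with $\EE[W_{k+1}\mid\mathcal F_k]\le\alpha W_k+c$ and $c\ge0$. Take $c=0$, $W_0=w<\lambda$, $W_1=\lambda$ with probability $\alpha w/\lambda$ and $W_1=0$ otherwise, and $W_{k+1}=\alpha W_k$ thereafter. The crossing probability is $\alpha w/\lambda$, which exceeds the claimed $\alpha^K w/\lambda$ for $K\ge2$.

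Consistently, your justification does not deliver the bound. Your $V_k$ is indeed a supermartingale with $V_0=\alpha^K W_0+c\sum_{i=1}^K\alpha^{i-1}$. However, $\alpha^{K-\tau}\ge\alpha^K$ plus discarding the nonnegative sum only gives $V_\tau\ge\alpha^K M$ on $\{\tau\le K\}$. That yields $P_e^x\le V_0/(\alpha^K M)$, off by a factor $\alpha^{-K}$.

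The missing ingredient is the case hypothesis, which your argument never uses. The condition $\delta<0$ (indeed $\delta\le0$ suffices) means $c\ge M(1-\alpha)$, and therefore for every $j\in\{0,\dots,K\}$
\[
\alpha^{j}M+c\sum_{i=1}^{j}\alpha^{i-1}\;\ge\;\alpha^{j}M+M(1-\alpha^{j})=M .
\]
Taking $j=K-\tau$ gives $V_\tau\ge M$ on the exit event; equivalently, $W_k\ge M$ implies $V_k\ge M$ for all $k\le K$. Optional stopping, or Ville's inequality applied to $V$, then yields $P_e^x\le V_0/M$, which is exactly the first estimate. This is also why that estimate is stated only for $\delta<0$. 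When $\delta>0$ the geometric sum no longer compensates for the factor $\alpha^{K-\tau}$, and one must use the multiplicative argument of your second case.
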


However, the above expectation-based CBF is generally intractable. To address this issue, we seek a deterministic condition that upper bounds the expectation condition in a tractable manner. For this purpose, we recall a Jensen-type inequality \cite{liao2019sharpening}, which characterizes the relation between the expectation of a function and the function evaluated at the expectation through a covariance-dependent correction term.

\begin{lemma}[\textbf{Jensen's Inequality}]\label{lem-jensen}
    Consider a function $h:\RR^n\to\RR$ and a random variable $x$ that takes values in $\RR^n$ satisfying
    \begin{itemize}
        \item [i)] $h$ is twice-continuously differentiable; 
        \item [ii)] there is $\lambda_{\max}\!>\!0$ such that $\sup_{x\in\RR^n}\!\|\nabla^2 h(x)\|_2\!\leq\!\lambda_{\max}$;
        \item [iii)] $\EE[\|x\|]<\infty$ and $\text{cov}(x)<\infty$.
    \end{itemize}
    Then we have\vspace{-3pt}
    \begin{equation}
        \EE[h(x)]\geq h(\EE[x])-\frac{1}{2}\lambda_{\max}\text{tr}(\text{cov}(x))\vspace{-3pt}\notag
    \end{equation}
    
\end{lemma}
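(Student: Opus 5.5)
The plan is a second-order Taylor expansion of $h$ around the mean $\mu:=\EE[x]$, followed by a bound on the quadratic remainder using the uniform Hessian bound. Since $h$ is $C^2$ by i), Taylor's theorem with integral remainder gives, for every realization of $x$,
\begin{equation}
h(x)=h(\mu)+\nabla h(\mu)^\top (x-\mu)+(x-\mu)^\top R(x)(x-\mu),\notag
\end{equation}
where $R(x):=\int_0^1(1-t)\nabla^2 h(\mu+t(x-\mu))\,dt$. By ii), each Hessian along the segment satisfies $\nabla^2 h\succeq -\lambda_{\max} I$. Integrating the weight $(1-t)$ over $[0,1]$ gives $R(x)\succeq -\tfrac{1}{2}\lambda_{\max}I$. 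Hence we get the pointwise lower bound
\begin{equation}
h(x)\ge h(\mu)+\nabla h(\mu)^\top (x-\mu)-\tfrac12\lambda_{\max}\|x-\mu\|_2^2.\notag
\end{equation}

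Next we take expectations on both sides. The right-hand side is integrable: the linear term is integrable because $\EE[\|x\|]<\infty$, and it has zero mean since $\EE[x-\mu]=0$. The quadratic term is integrable because $\text{cov}(x)<\infty$, and $\EE[\|x-\mu\|_2^2]=\text{tr}(\text{cov}(x))$. Since $h(x)$ is bounded below by an integrable random variable, $\EE[h(x)]$ is well defined in $(-\infty,+\infty]$. Monotonicity of expectation then yields $\EE[h(x)]\ge h(\EE[x])-\tfrac12\lambda_{\max}\text{tr}(\text{cov}(x))$, which is the claim.

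The main technical point is the integrability bookkeeping rather than the algebra. We need the expectation of $h(x)$ to make sense, and we need to separate the expectations of the linear and quadratic terms legitimately. I would handle this by stating the inequality for the integrable lower bound first and then invoking monotonicity, so that $h(x)$ never has to be split into pieces. A second point is that $\|\cdot\|_2$ of a symmetric Hessian is its spectral radius. This is exactly what gives the semidefinite bound $\nabla^2 h\succeq-\lambda_{\max}I$; only the lower eigenvalue bound is used, so the result actually holds under a one-sided curvature condition.
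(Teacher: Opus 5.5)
Your proposal is correct and takes essentially the same route as the paper: a second-order Taylor expansion of $h$ about $\EE[x]$, a pointwise lower bound on the quadratic remainder by $-\frac{1}{2}\lambda_{\max}\|x-\EE[x]\|_2^2$ from the Hessian bound, and then taking expectations so that the linear term vanishes and the quadratic term becomes $\text{tr}(\text{cov}(x))$. The differences are only refinements: you use the integral form of the remainder instead of the paper's mean-value point $\xi_x$, which avoids any measurability question about $\xi_x$, and you apply monotonicity of expectation to an integrable lower bound rather than splitting $\EE[h(x)]$ term by term as the paper does.
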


\begin{proof}
Since $h$ is twice-continuously differentiable, based on the second-order Taylor theorem, we have\vspace{-3pt}
\begin{flalign}\label{eq-taylor}
    h(x)&=h(\EE[x])+\nabla h(\EE[x])^T (x-\EE[x])\notag\\
    &~~~+\frac{1}{2}(x-\EE[x])^T \nabla^2 h(\xi_x)(x-\EE[x])\vspace{-3pt}
\end{flalign}
where $\xi_x:=\EE[x] + \theta_x (x - \EE[x] )$ for some $\theta_x\in(0,1)$.\vspace{2pt}

By $\sup_{x\in\RR^n}\!\|\nabla^2 h(x)\|_2\!\leq\!\lambda_{\max}$, we have\vspace{-2pt}
    \begin{flalign}
        (x-\EE[x])^T \nabla^2 h(\xi_x)(x-\EE[x]) &\geq -\|\nabla^2 h(\xi_x)\left\|\|x\!-\!\EE[x]\right\|^2\notag\\
        &\geq -\lambda_{\max}\left\|x-\EE[x]\right\|^2\notag\vspace{-3pt}
    \end{flalign}
Taking expectation of both sides, we have\vspace{-3pt}
\begin{flalign}\label{eq-expectationtaylor}
    \EE[(x\!-\!\EE[x])^T \nabla^2 h(\xi_x)(x\!-\!\EE[x])]&\geq -\lambda_{\max}\EE\left[\|x\!-\!\EE[x]\|^2\right]\notag\\
    &=-\lambda_{\max}\text{tr}(\text{cov}(x))\vspace{-3pt}
\end{flalign}
Substituting \eqref{eq-expectationtaylor} into \eqref{eq-taylor}, we have\vspace{-3pt}
\begin{flalign}
    \EE[h(x)]&=h(\EE[x])+\frac{1}{2}\EE[(x-\EE[x])^T \nabla^2 h(\xi)(x-\EE[x])]\notag\\
    &\geq h(\EE[x])-\frac{1}{2}\lambda_{\max}\text{tr}(\text{cov}(x)) \notag \vspace{-3pt}
\end{flalign}
The proof is thus completed.
\end{proof}

\begin{remark}
Here we would like to remark that we mainly follow Lemma 1 in \cite{cosner2023robust}, but remove the concavity assumption on $h$. In fact, as shown in the proof above, the concavity of $h$ is not necessary for the stated bound to hold. 
\end{remark}

Now we present a tractable deterministic condition, derived from the Jensen's inequality, to enforce the expectation-based CBF condition. The following result adapts \cite[Theorem 6]{cosner2023robust} from general nonlinear systems to the time-varying linear system considered in this paper.

\begin{lemma}
    Consider system~\eqref{eq-state}. Let $h:\RR^n\to\RR$ be
    \begin{itemize}
        \item [i)] twice-continuously differentiable; and
        \item [ii)] there is $M\!>\!0$ such that $\sup_{x\in\RR^n}h(x)\leq M$.
    \end{itemize}
    Suppose that there exists $\alpha\in(0,1)$ such that\vspace{-3pt}
    \begin{equation}\label{eq-cbffully}
        h(A_kx_k+B_ku_k)-c_J\geq \alpha h(x_k),~\forall x_k\in\RR^n, \vspace{-3pt}
    \end{equation}
    where \vspace{-3pt}
    \begin{equation}
        0\leq c_J\leq \frac{\lambda_{\max}}{2}\text{tr}(\text{cov}[w_k])+M(1-\alpha) \vspace{-3pt}\notag
    \end{equation}
    Then we have $\delta=c_J-\frac{\lambda_{\max}}{2}\text{tr}(\text{cov}[w_k])$ such that\vspace{-3pt}
    \begin{equation}
        \EE\left[ h(A_kx_k+B_ku_k)\mid x_k \right]\geq \alpha h(x_k) + \delta,~\forall x_k\in \RR^n. \vspace{-3pt}\notag
    \end{equation}
\end{lemma}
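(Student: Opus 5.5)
The plan is to apply Lemma~\ref{lem-jensen} to the random next state and then chain the resulting bound with the deterministic condition \eqref{eq-cbffully}. We read the expectation in the conclusion as $\EE[h(x_{k+1})\mid x_k]$ with $x_{k+1}=A_kx_k+B_ku_k+w_k$, since that is the quantity Lemma~\ref{lem-expectationcbf} uses. We also take $\lambda_{\max}$ to be the Hessian bound from condition ii) of Lemma~\ref{lem-jensen}, so that hypothesis is implicitly assumed here.

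First, fix $x_k$ and $u_k$, and set $z:=A_kx_k+B_ku_k+w_k$. By Assumption~\ref{ass-gaussian}, conditionally on $x_k$ the vector $z$ is Gaussian with mean $\EE[z\mid x_k]=A_kx_k+B_ku_k$ and covariance $\cov(z\mid x_k)=\cov[w_k]$. Here $w_k$ is independent of $x_k$, because $x_k$ depends only on $x_0$, $w_0,\dots,w_{k-1}$, and inputs that are functions of past data. This independence step is the only point needing care: if $u_k$ depends on measurements, it must still be determined by information independent of $w_k$, which holds by the white-noise assumption. Since $z$ is Gaussian, it has a finite first moment and covariance, so hypothesis iii) of Lemma~\ref{lem-jensen} holds. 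Hypothesis i) holds by assumption on $h$.

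Second, apply Lemma~\ref{lem-jensen} conditionally. This gives
\begin{equation}
\EE[h(z)\mid x_k]\geq h(A_kx_k+B_ku_k)-\tfrac{\lambda_{\max}}{2}\text{tr}(\cov[w_k]).\notag
\end{equation}
Adding and subtracting $c_J$ and using \eqref{eq-cbffully} yields
\begin{equation}
\EE[h(z)\mid x_k]\geq \alpha h(x_k)+c_J-\tfrac{\lambda_{\max}}{2}\text{tr}(\cov[w_k])=\alpha h(x_k)+\delta,\notag
\end{equation}
and this holds for every $x_k\in\RR^n$.

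Finally, check that $\delta$ is admissible for Lemma~\ref{lem-expectationcbf}. The upper bound on $c_J$ gives $\delta\leq M(1-\alpha)$, and $\alpha\in(0,1)$ is given, so the hypotheses of Lemma~\ref{lem-expectationcbf} are met. The main obstacle is not computational. It is justifying the conditional use of Lemma~\ref{lem-jensen} through the independence argument above, and making the implicit Hessian bound $\lambda_{\max}$ explicit.
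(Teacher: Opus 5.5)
Your proof is correct and matches the intended argument. The paper gives no separate proof of this lemma and defers to \cite[Theorem 6]{cosner2023robust}, but your pattern is exactly what it does in the proof of Proposition~\ref{kalman-jensen}: apply Lemma~\ref{lem-jensen} conditionally to $A_kx_k+B_ku_k+w_k$, then chain with \eqref{eq-cbffully}. Reading the expectation as $\EE[h(x_{k+1})\mid x_k]$ and making the Hessian bound $\lambda_{\max}$ an explicit hypothesis are the right repairs of the statement as written.

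One minor imprecision is your aside on measurement-dependent $u_k$. Independence from $w_k$ alone does not make $A_kx_k+B_ku_k$ the conditional mean given $x_k$; you would instead condition on $(x_k,u_k)$ and use the tower property. In the state-feedback setting of this lemma, however, $u_k$ is a function of $x_k$, so nothing breaks.
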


The above results are established under the assumption of perfect state information. In contrast, when only partial and noisy output measurements are available, the exact state cannot be used to update the CBF condition \eqref{eq-cbffully}. To this end, in what follows, we introduce an output-feedback safe control framework that combines state estimation and stochastic barrier-based safety guarantees.

\section{Main Results}\label{sec:main}

In this section, we first review the Kalman filtering technique for state estimation and introduce the corresponding safety condition based on the estimated state. We then present our main result, which provides a tractable output-feedback safe controller based on the Jensen's inequality together with an explicit probabilistic safety guarantee.

\subsection{Kalman Filtering and Safety under State Estimation}

To handle the output-feedback control setting, we employ a Kalman filter \cite{anderson2005optimal} to estimate the state of system~\eqref{eq-sys}. Let
\[
\hat{x}_k := \EE[x_k \mid y_0,\dots,y_{k-1}]
\]
denote the one-step state predictor. The filter is given by
\vspace{-3pt}
\begin{equation}\label{eq-kalman}
    \hat{x}_{k+1}=A_k\hat{x}_k+B_ku_k+K_k(y_k-C_k\hat{x}_k) \vspace{-3pt}
\end{equation}
where the Kalman gain is
\begin{equation}
    K_k=A_kP_kC_k^\top(C_kP_kC_k^\top+R_k)^{-1},\notag
\end{equation}
and the prediction error covariance matrix $P_k$
satisfies
\vspace{-3pt}
\begin{flalign}
    P_{k+1}
    &=A_kP_kA_k^\top+Q_k \notag\\
    &\quad -A_kP_kC_k^\top(C_kP_kC_k^\top+R_k)^{-1}C_kP_kA_k^\top,\notag\\
    P_0&=\EE\!\left[(x_0-\hat{x}_0)(x_0-\hat{x}_0)^\top\right],\notag\\
    Q_k&=\EE[w_kw_k^\top],\qquad
    R_k=\EE[v_kv_k^\top].\notag \vspace{-3pt}
\end{flalign}

We present two basic properties of the state estimator concerning the mean and covariance of the estimation error \cite{anderson2005optimal}, as well as its boundedness over a finite horizon \cite{clark2021control}.

\begin{proposition}\label{prop-expcov}
Consider system~\eqref{eq-sys} with Kalman filter~\eqref{eq-kalman} under Assumption~\ref{ass-gaussian}. For all $k\in\NN$, we have \vspace{-3pt}
    \begin{equation}
        \EE[x_k-\hat{x}_k]=0, ~\text{cov}[x_k-\hat{x}_k\mid \y_{0:k}]=P_k \notag
    \end{equation}
\end{proposition}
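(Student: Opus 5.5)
The plan is to work with the estimation error $e_k:=x_k-\hat x_k$ directly and proceed by induction on $k$. Subtracting the filter \eqref{eq-kalman} from \eqref{eq-state} and using $y_k=C_kx_k+v_k$, the control terms $B_ku_k$ cancel. This gives the closed error recursion
\begin{equation}
e_{k+1}=(A_k-K_kC_k)e_k+w_k-K_kv_k. \notag
\end{equation}
The key observation is that this recursion does not involve $u_k$. Moreover, $K_k$ and $P_k$ are deterministic, since the Riccati recursion uses only $A_k,C_k,Q_k,R_k$ and $P_0$. Hence, even though $u_k=\pi_k(\mathbf y_{0:k},\hat{\mathbf u}_{0:k})$ is a nonlinear output-feedback law, $e_k$ is a linear function of the jointly Gaussian variables $(e_0,w_0,\dots,w_{k-1},v_0,\dots,v_{k-1})$. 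Therefore $e_k$ is Gaussian.

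\textbf{Mean.} Taking expectations in the recursion and using that $w_k,v_k$ are zero mean (Assumption~\ref{ass-gaussian}) gives $\EE[e_{k+1}]=(A_k-K_kC_k)\EE[e_k]$. The base case is $\EE[e_0]=0$, because $\hat x_0=\EE[x_0]$ is the prior mean. Induction then yields $\EE[x_k-\hat x_k]=0$ for all $k$. Equivalently, one can use the tower property on $\hat x_k=\EE[x_k\mid y_0,\dots,y_{k-1}]$.

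\textbf{Covariance.} I would first compute the unconditional covariance $\Sigma_k:=\EE[e_ke_k^\top]$. Since $w_k$ and $v_k$ are independent of $e_k$ (white noise, mutually independent), the recursion gives
\begin{equation}
\Sigma_{k+1}=(A_k-K_kC_k)\Sigma_k(A_k-K_kC_k)^\top+Q_k+K_kR_kK_k^\top. \notag
\end{equation}
Substituting the gain $K_k=A_kP_kC_k^\top(C_kP_kC_k^\top+R_k)^{-1}$ and assuming inductively that $\Sigma_k=P_k$, a routine completion-of-squares calculation reduces this (Joseph) form to the Riccati update. Hence $\Sigma_{k+1}=P_{k+1}$.

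The remaining step is to pass from the unconditional to the conditional covariance, and I expect this to be the main obstacle. The plan is to show that $e_k$ is independent of the measurement history available to the predictor. By the orthogonality principle, $e_k$ is uncorrelated with every linear function of past measurements, hence with the past innovations $y_j-C_j\hat x_j$ for $j<k$. The inputs are measurable functions of the measurements, and they drop out of both the error and the innovations. So the past innovations are jointly Gaussian with $e_k$, and uncorrelatedness upgrades to independence. The innovations generate the same $\sigma$-algebra as the measurements because the map between them is causal and invertible. It follows that $\cov[e_k\mid\text{measurements}]=\cov[e_k]=P_k$, which is deterministic.

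One subtlety must be handled carefully. Because $\hat x_k$ is the one-step predictor, the clean conditioning set is $\mathbf y_{0:k-1}$, i.e.\ the information determining $\hat x_k$. The current $y_k$ is correlated with $e_k$ through $C_ke_k$. I would therefore state and prove the result with $\mathbf y_{0:k}$ read as the predictor's information set, consistent with the definition of $\hat x_k$, and cite \cite{anderson2005optimal} for the standard Gaussian filtering facts used.
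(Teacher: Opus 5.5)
The paper gives no proof of this proposition; it attributes both identities to standard Kalman filtering theory \cite{anderson2005optimal}. Your argument is a genuinely different, self-contained route. It is correct for the statement read with the predictor's information set $\y_{0:k-1}$, and that reading is necessary, not optional. Since $\hat x_k$ is $\y_{0:k-1}$-measurable, the literal statement would give $\cov[x_k-\hat x_k\mid\y_{0:k}]=P_k-P_kC_k^\top(C_kP_kC_k^\top+R_k)^{-1}C_kP_k$, which is the filtered covariance. The conditional mean of the error would also be nonzero, namely $P_kC_k^\top(C_kP_kC_k^\top+R_k)^{-1}(y_k-C_k\hat x_k)$. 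You were right to flag this.

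Compared with the citation, which buys brevity, your route has two advantages. First, it explicitly handles the fact that $u_k$ is a nonlinear function of the measurements. Under such feedback $x_k$ is not Gaussian, so textbook results for known inputs do not apply verbatim. Second, it yields the stronger conclusion that $e_k$ is independent of $\sigma(\y_{0:k-1})$. This shows that \eqref{eq-kalman} really computes $\EE[x_k\mid\y_{0:k-1}]$ under output feedback. It also gives Lemma~\ref{lem-kalman-cov} immediately, since $\hat x_k$ is $\y_{0:k-1}$-measurable, with no need for the sufficient-statistic argument.

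Two points need tightening.

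First, joint Gaussianity of $(e_0,w_{0:k-1},v_{0:k-1})$ requires $x_0$ to be Gaussian and independent of the noises. Assumption~\ref{ass-gaussian} does not state this, so add it as a hypothesis (the paper uses it implicitly as well).

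Second, the step ``$e_k$ is uncorrelated with every linear function of past measurements, hence with the past innovations'' fails as written in closed loop. With nonlinear $\pi_k$, the innovation $\nu_j:=y_j-C_j\hat x_j$ is not a linear function of $\y_{0:j}$. Invoking orthogonality of the conditional-mean error instead would presuppose what you are proving.

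The fix uses something you already observed: $e_k$ and $\nu_j=C_je_j+v_j$ are input-free linear functionals of the primitives. So compute the cross-covariance directly. For $j<k$, $\EE[e_k\nu_j^\top]=\Phi_{k,j+1}\bigl[(A_j-K_jC_j)P_jC_j^\top-K_jR_j\bigr]$, where $\Phi_{k,j+1}=(A_{k-1}-K_{k-1}C_{k-1})\cdots(A_{j+1}-K_{j+1}C_{j+1})$. This uses $\EE[e_je_j^\top]=P_j$ from your induction, $e_j$ independent of $v_j$, and later noises independent of $\nu_j$. The bracket vanishes because $K_j(C_jP_jC_j^\top+R_j)=A_jP_jC_j^\top$.

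With that, joint Gaussianity upgrades uncorrelatedness to independence of $e_k$ from $(\nu_0,\dots,\nu_{k-1})$. Your causal-invertibility argument for $\sigma(\nu_{0:k-1})=\sigma(\y_{0:k-1})$ then finishes the proof.
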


\begin{proposition}\label{prop-kalman}
Consider system~\eqref{eq-sys} with Kalman filter~\eqref{eq-kalman} under Assumption~\ref{ass-gaussian}. For any $\sigma\!>\!0$, there exists $\gamma\!>\!0$ such that \vspace{-3pt}
\begin{equation}
    \PP\left[ \sup_{0\leq k \leq K} \|x_k-\hat{x}_k\|_2\leq \gamma \right] \geq 1-\sigma\notag
\end{equation}
\end{proposition}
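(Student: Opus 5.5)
We prove Proposition~\ref{prop-kalman} by reducing the uniform-in-time statement to a finite union of single-time tail bounds. The horizon $K$ is finite and fixed, so only $K+1$ error vectors $e_k:=x_k-\hat{x}_k$, $k=0,\dots,K$, are involved. For each of them we will show that $\|e_k\|_2$ has tails controlled by the covariance $P_k$. A union bound then converts the per-time bounds into the stated bound on $\sup_{0\leq k\leq K}\|e_k\|_2$.

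The first step is to identify the law of each $e_k$. Under Assumption~\ref{ass-gaussian}, with $x_0$ Gaussian as implicitly assumed by the filter initialization, $x_k$, $y_k$ and $\hat{x}_k$ are affine functions of the jointly Gaussian variables $x_0$, $\{w_j\}$ and $\{v_j\}$. This holds for any control input that is a deterministic function of past outputs, because $u_k$ enters $x_{k+1}$ and $\hat{x}_{k+1}$ identically and cancels in $e_{k+1}$. Indeed, subtracting \eqref{eq-kalman} from \eqref{eq-state} gives
\begin{equation}
e_{k+1}=(A_k-K_kC_k)e_k+w_k-K_kv_k, \notag
\end{equation}
so the error dynamics are linear, independent of the control, and driven by Gaussian noise. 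Hence $e_k$ is Gaussian. Proposition~\ref{prop-expcov} gives $\EE[e_k]=0$, and the unconditional covariance also equals $P_k$ (by the tower property, or directly from the recursion above, which reproduces the Riccati update). Thus $e_k\sim\mathcal N(0,P_k)$ with $\text{tr}(P_k)<\infty$.

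The second step is a per-time tail bound. By Markov's inequality applied to $\|e_k\|_2^2$,
\begin{equation}
\PP[\|e_k\|_2>\gamma]\leq \frac{\EE\|e_k\|_2^2}{\gamma^2}=\frac{\text{tr}(P_k)}{\gamma^2}. \notag
\end{equation}
Gaussianity would allow a sharper bound via a chi-square tail, but Chebyshev suffices for existence of $\gamma$.

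The third step combines these with a union bound:
\begin{equation}
\PP\Big[\sup_{0\leq k\leq K}\|e_k\|_2>\gamma\Big]\leq \sum_{k=0}^K\frac{\text{tr}(P_k)}{\gamma^2}. \notag
\end{equation}
Choosing $\gamma:=\big(\sum_{k=0}^K\text{tr}(P_k)/\sigma\big)^{1/2}$, or any positive value if the sum is zero, makes the right-hand side at most $\sigma$. Taking complements yields the claim.

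The main obstacle is the first step: justifying the finite second moment and mean-zero property of $e_k$ rigorously under output feedback, where $u_k$ depends on the measurements. The cancellation in the error recursion resolves this, since the error evolves independently of the controller. If a conditional version were needed instead, one would apply Proposition~\ref{prop-expcov} conditionally and then integrate.
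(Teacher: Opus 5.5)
Your proof is correct, but it does not follow the paper, which gives no argument of its own. The paper imports the statement from \cite{clark2021control}, which works in continuous time with a continuous-time (extended) Kalman filter. There the supremum runs over a continuum of instants, and the bound rests on stochastic-stability properties of the error process rather than on a union bound. Strictly speaking, that result does not cover the discrete-time recursion \eqref{eq-kalman}.

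You instead use the two features that make the discrete, finite-horizon case elementary. First, only $K+1$ instants occur, so a union bound suffices. Second, $e_{k+1}=(A_k-K_kC_k)e_k+w_k-K_kv_k$ does not involve $u_k$, so $\EE[e_ke_k^\top]=P_k$ follows from the Riccati update for any causal output-feedback law, including the paper's optimization-based one.

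Your route buys three things:
\begin{itemize}
\item a self-contained proof for the setting the paper actually treats;
\item an explicit, offline-computable radius $\gamma=\big(\sum_{k=0}^{K}\mathrm{tr}(P_k)/\sigma\big)^{1/2}$, which the paper needs to evaluate $h_\gamma$ but never specifies;
\item the observation that Gaussianity is unnecessary: only independence of $e_0$ from the noises and finite second moments are used, and since the paper defines $P_0$ as a second moment, not even $\EE[e_0]=0$ is needed.
\end{itemize}
The price is conservatism: with bounded $P_k$, your $\gamma$ grows like $\sqrt{K}$. The Gaussian (chi-square) tail you mention would reduce this to order $\sqrt{\log K}$.

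Two minor corrections. Under a nonlinear feedback law, $x_k$ and $\hat x_k$ are not affine in $(x_0,\{w_j\},\{v_j\})$; only $e_k$ is, and that is all you actually use. Also, rely on the direct recursion rather than the tower property for $\EE[e_ke_k^\top]=P_k$. Proposition~\ref{prop-expcov} conditions on $\y_{0:k}$, although $\hat x_k$ is the one-step predictor built from $\y_{0:k-1}$, so the tower-property route would first require correcting that statement.
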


In output-feedback setting, the exact state is not directly available, and therefore the CBF condition must be updated using the state estimate rather than the true state. To this end, we construct a new safety function whose nonnegativity implies the safety of the true state. Inspired by \cite{clark2021control}, we define \vspace{-3pt}
\begin{equation}
    h_\gamma  =  \sup\{ h(x)\!:\!\|x-x^0\|\leq \gamma \text{ for some }x^0\in h^{-1}(\{0\}) \}\notag
\end{equation}

\begin{lemma}[Lemma 4 in \cite{clark2021control}]\label{lem-ineqestimate}
    If $\|x_k-\hat{x}_k\|_2\leq \gamma$ and $h(\hat{x}_k)>h_\gamma$ hold, then we have $h(x_k)\geq 0$.
\end{lemma}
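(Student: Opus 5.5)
We argue by contradiction, working on the level set $h^{-1}(\{0\})$ and using continuity of $h$ along the segment joining $\hat{x}_k$ and $x_k$. Suppose $\|x_k-\hat{x}_k\|_2\leq\gamma$ and $h(\hat{x}_k)>h_\gamma$, but $h(x_k)<0$.

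The first step is to check that $h_\gamma\geq 0$. Whenever $h^{-1}(\{0\})$ is nonempty, any $x^0$ in it is admissible in the supremum with $x=x^0$, giving $h_\gamma\geq h(x^0)=0$. Consequently $h(\hat{x}_k)>h_\gamma\geq 0$. If instead $h^{-1}(\{0\})$ is empty, then $h_\gamma=\sup\emptyset=-\infty$ by convention. Since $h$ is continuous on the connected set $\RR^n$ and never vanishes, it has a constant sign. So $h(\hat{x}_k)>h_\gamma$ by itself does not settle the matter. I would handle this degenerate case by adopting the convention that the lemma is only meaningful when the zero level set is nonempty, or by noting that the paper assumes it implicitly. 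This case is the only delicate point of the proof.

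Next, I consider the segment $z(t)=\hat{x}_k+t(x_k-\hat{x}_k)$ for $t\in[0,1]$. The map $t\mapsto h(z(t))$ is continuous, with $h(z(0))>0$ and $h(z(1))<0$. By the intermediate value theorem there is $t^*\in(0,1)$ with $h(z(t^*))=0$, so $x^0:=z(t^*)\in h^{-1}(\{0\})$. Moreover,
\[
\|\hat{x}_k-x^0\|=t^*\|x_k-\hat{x}_k\|\leq\gamma .
\]
Hence $\hat{x}_k$ is a point within distance $\gamma$ of the zero level set, so it is admissible in the supremum defining $h_\gamma$. This gives $h(\hat{x}_k)\leq h_\gamma$, contradicting the hypothesis. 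Therefore $h(x_k)\geq 0$.

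The key observation is that the roles are reversed from what one might expect: it is the \emph{estimate} $\hat{x}_k$, not $x_k$, that lies near the zero set. The choice of $\sup$ in the definition of $h_\gamma$ is exactly what makes $h(\hat{x}_k)\leq h_\gamma$ follow.
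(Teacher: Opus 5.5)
Your proof is correct and is essentially the standard argument for the cited Lemma~4 in \cite{clark2021control}; the paper itself only cites that lemma and gives no proof. The argument runs as follows: a nonempty zero level set gives $h_\gamma\geq 0$, hence $h(\hat{x}_k)>0$, and if $h(x_k)<0$ the intermediate value theorem on the segment from $\hat{x}_k$ to $x_k$ gives a zero within distance $\gamma$ of $\hat{x}_k$, which forces $h(\hat{x}_k)\leq h_\gamma$. You were right to flag the degenerate case, and it can be stated more sharply. With $\sup\emptyset=-\infty$, the case $h>0$ everywhere is trivial, while the case $h<0$ everywhere is a genuine counterexample, so the statement holds exactly when the safe set $\{x: h(x)\geq 0\}$ is nonempty.
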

By the above construction, we define the state estimate-based safety function as \vspace{-3pt}
\begin{equation}\label{eq-hhat}
    \hat{h}(x)=h(x)-h_\gamma \vspace{-3pt}
\end{equation}
Accordingly, the $K$-step exit probability associated with \(\hat h\) is defined as \vspace{-3pt}
 \begin{equation}
    P_e^{\hat{x}}:=\PP\left[\min_{k\in\{0,\ldots,K\}} \hat{h}(\hat{x}_k)< 0 \right]\notag
\end{equation}

Based on the above results, we can relate the exit probability of the true state to that of the estimated state as follows.

\begin{proposition}\label{coro-kalman}
    Consider system~\eqref{eq-sys} with Kalman filter~\eqref{eq-kalman} under Assumption~\ref{ass-gaussian}. Then, we have \vspace{-3pt}
    \begin{equation}
        P_e^x\leq P_e^{\hat{x}}+\sigma\notag \vspace{-3pt}
    \end{equation}
    Equivalently, we have \vspace{-3pt}
    \begin{equation}
    P_s^x \geq 1-P_e^{\hat{x}}-\sigma\notag  
\end{equation}
\end{proposition}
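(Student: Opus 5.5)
The argument is a union bound built on two earlier results. Proposition~\ref{prop-kalman} controls the estimation error uniformly over the horizon, and Lemma~\ref{lem-ineqestimate} transfers safety of the estimate to safety of the true state. Fix $\sigma>0$ and choose $\gamma$ as in Proposition~\ref{prop-kalman}; the same $\gamma$ defines $h_\gamma$ and hence $\hat h$ in \eqref{eq-hhat}.

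Introduce three events over the horizon $\{0,\dots,K\}$:
\begin{itemize}
\item the \emph{good-estimation} event $G:=\{\sup_{0\leq k\leq K}\|x_k-\hat x_k\|_2\leq\gamma\}$, for which Proposition~\ref{prop-kalman} gives $\PP[G^c]\leq\sigma$;
\item the \emph{estimate-safe} event $S_{\hat x}:=\{\min_k \hat h(\hat x_k)\geq 0\}$, for which $\PP[S_{\hat x}^c]=P_e^{\hat x}$;
\item the \emph{true-safe} event $S_x:=\{\min_k h(x_k)\geq 0\}$, for which $\PP[S_x^c]=P_e^x$.
\end{itemize}

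The key step is the inclusion $G\cap S_{\hat x}\subseteq S_x$. On $G\cap S_{\hat x}$, each $k$ satisfies $\|x_k-\hat x_k\|\leq\gamma$ and $h(\hat x_k)\geq h_\gamma$, and Lemma~\ref{lem-ineqestimate} then gives $h(x_k)\geq 0$.

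Taking complements gives $S_x^c\subseteq G^c\cup S_{\hat x}^c$. The union bound then yields $P_e^x\leq \PP[S_{\hat x}^c]+\PP[G^c]\leq P_e^{\hat x}+\sigma$. The equivalent form follows from $P_s^x=1-P_e^x$ in \eqref{eq-Ps}.

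The main obstacle is a strict-versus-nonstrict mismatch. Lemma~\ref{lem-ineqestimate} requires $h(\hat x_k)>h_\gamma$, but the complement of the exit event only gives $\hat h(\hat x_k)\geq 0$. I would handle the boundary case $h(\hat x_k)=h_\gamma$ directly from the definition of $h_\gamma$, by contradiction. Suppose $h(x_k)<0$. Since $h(\hat x_k)=h_\gamma\geq 0$ (assuming the zero level set is nonempty), continuity of $h$ along the segment from $\hat x_k$ to $x_k$ yields a zero $x^0$ of $h$ on that segment. Then $\|\hat x_k-x^0\|\leq\gamma$, so by definition $h(\hat x_k)\leq h_\gamma$, which is consistent with equality; the contradiction therefore has to come from the strict-inequality mechanism in the proof of Lemma~\ref{lem-ineqestimate}. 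If this boundary case cannot be closed cleanly, I would instead note that it is a measure-zero event for the Gaussian estimate under mild regularity of $h$, so it does not affect the probabilities. Alternatively, one can read the lemma with a nonstrict inequality, as \cite{clark2021control} effectively does.

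A secondary point is consistency of the horizon: the $K$ in Proposition~\ref{prop-kalman} must match the horizon in the exit probabilities, which it does by construction.
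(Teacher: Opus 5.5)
Your union-bound argument is exactly the paper's proof. The paper asserts $G\cap S_{\hat x}\subseteq S_x$ by citing Lemma~\ref{lem-ineqestimate}, passes to complements, and invokes Proposition~\ref{prop-kalman}. It never mentions the strict-versus-nonstrict mismatch, so the issue you flag is a real gap in the paper too. Neither of your fallbacks closes it, however.

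\textbf{The nonstrict reading is false.} Take $n=1$, $h(x)=1-x^2$ (which satisfies i)--iii) of Theorem~\ref{thm}) and $\gamma=3/2$. The $\gamma$-neighbourhood of $h^{-1}(\{0\})=\{\pm1\}$ contains $0$, so $h_\gamma=1$ and $\hat x=0$ has $\hat h(\hat x)=0$. Yet $x=3/2$ satisfies $|x-\hat x|\le\gamma$ and $h(x)<0$.

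\textbf{Consequence for the statement.} Take $\hat x_0=0$, $x_0\sim\mathcal N(\hat x_0,P_0)$, $K=0$ and $\sigma:=\PP[|x_0-\hat x_0|>3/2]$, so that $\gamma=3/2$ is admissible in Proposition~\ref{prop-kalman}. Then $P_e^{\hat x}=0$, while $P_e^x=\PP[|x_0|>1]>\sigma=P_e^{\hat x}+\sigma$. So for this admissible $\gamma$ the claimed inequality fails, and the boundary case cannot be disposed of for free.

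\textbf{The measure-zero route needs hypotheses you do not have.} The initial estimate $\hat x_0$ is deterministic, and $h$ may be constant on open sets. Also, given $\hat x_k$, the next estimate lies in the affine set $A_k\hat x_k+B_ku_k+\mathrm{range}(K_k)$, whose dimension is at most $n_y$.

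\textbf{What your contradiction argument does give.} It is sharper than you noticed. Suppose $\|x_k-\hat x_k\|\le\gamma$, $h(\hat x_k)=h_\gamma$ and $h(x_k)<0$.
\begin{itemize}
\item The zero $x^0$ lies strictly before $x_k$ on the segment, so $\|\hat x_k-x^0\|<\gamma$.
\item Hence a whole ball around $\hat x_k$ lies inside the set defining $h_\gamma$.
\item This forces $h\le h(\hat x_k)$ near $\hat x_k$, i.e.\ $\hat x_k$ is a local maximizer of $h$.
\end{itemize}
So the inclusion $G\cap S_{\hat x}\subseteq S_x$ does hold for the nonstrict event whenever $h$ has no local maximizer at level $h_\gamma$. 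An example is $h$ concave with $h_\gamma<\sup h$. Otherwise, the fix is to define the estimate-exit event as $\{\min_k\hat h(\hat x_k)\le 0\}$. With that change your proof goes through verbatim using the strict lemma.
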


\begin{proof}
   By Lemma~\ref{lem-ineqestimate}, we have \vspace{-3pt}
    \begin{flalign}
        &\left\{ \min_{k\in\{0,1,\ldots,K\}} \hat{h}(\hat{x}_k)\geq 0 \right\}\!\cap\!\left\{\!\sup_{k\in\{0,1,\ldots,K\}}\!\left\|x_k-\hat{x}_k\right\|_2\leq\gamma\!\right\}\notag\\
        &\subseteq\left\{ \min_{k\in\{0,1,\ldots,K\}}h(x_k)\geq 0 \right\}\notag \vspace{-3pt}
    \end{flalign}
    Equivalently, we have \vspace{-3pt}
    \begin{flalign}
        &\left\{ \min_{k\in\{0,1,\ldots,K\}}h(x_k)< 0 \right\}\subseteq\notag\\
        &\left\{ \min_{k\in\{0,1,\ldots,K\}} \hat{h}(\hat{x}_k)< 0 \right\}\!\cup\!\left\{\!\sup_{k\in\{0,1,\ldots,K\}}\left\|x_k-\hat{x}_k\right\|_2\!>\!\gamma\!\right\}\notag \vspace{-3pt}
    \end{flalign}
    Taking probability of both sides, we have \vspace{-3pt}
    \begin{equation}
        P_e^x\leq P_e^{\hat{x}}+\PP\left[ \sup_{k\in\{0,1,\ldots,K\}}\left\|x_k-\hat{x}_k\right\|_2>\gamma \right]\notag 
    \end{equation}
    By Proposition~\ref{prop-kalman}, we have $P_e^x\leq P_e^{\hat{x}}+\sigma$.
    By the definition of \eqref{eq-Ps}, we have $P_s^x \geq 1-P_e^{\hat{x}}-\sigma$.
\end{proof}

\subsection{Output-Feedback Stochastic Control Barrier Function}

In this subsection, we derive a tractable deterministic control barrier condition for system~\eqref{eq-sys} by using a Jensen-type inequality to enforce the expectation-based CBF condition formulated in terms of the Kalman filter state estimate.

Before presenting the Jensen-based CBF condition, we first introduce a lemma showing that, for a linear time-varying system~\eqref{eq-sys}, conditioning on the current estimate $\hat{x}_k$ is sufficient to characterize the conditional covariance of the Kalman estimation error, without explicitly conditioning on the full measurement history $\y_{0:k}$.

\begin{lemma}\label{lem-kalman-cov}
Consider system~\eqref{eq-sys} with Kalman filter~\eqref{eq-kalman} under Assumption~\ref{ass-gaussian}. Then, for all $k\in\NN$, we have \vspace{-4pt}
\begin{flalign}
    \EE [x_k-\hat{x}_k\mid \hat{x}_k]&=0,\notag\\
    \cov[x_k-\hat{x}_k\mid \hat{x}_k]&=P_k. \vspace{-8pt}\notag
\end{flalign}
\end{lemma}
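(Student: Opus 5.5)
The approach is to use the fact that, under Assumption~\ref{ass-gaussian} and the linear dynamics \eqref{eq-sys}, $\hat{x}_k$ is a measurable function of the measurement history. The tower property then reduces conditioning on $\hat{x}_k$ to conditioning on the history. I would first note that, for a given control sequence (or one chosen causally from past measurements, as in the output-feedback setting), the recursion \eqref{eq-kalman} shows that $\hat{x}_k$ is a deterministic function of $y_0,\dots,y_{k-1}$ and the inputs. Hence $\sigma(\hat{x}_k)\subseteq\mathcal{F}_{k-1}:=\sigma(\y_{0:k-1})$. This is the same history that underlies the covariance statement of Proposition~\ref{prop-expcov}, with the paper's indexing convention for $\y_{0:k}$ taken as the information defining the predictor. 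Since $\hat{x}_k$ is by definition $\EE[x_k\mid \mathcal{F}_{k-1}]$, we have $\EE[x_k-\hat{x}_k\mid\mathcal{F}_{k-1}]=0$. Applying the tower property with $\sigma(\hat{x}_k)\subseteq\mathcal{F}_{k-1}$ then gives $\EE[x_k-\hat{x}_k\mid\hat{x}_k]=0$, which is the first claim.

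For the covariance, I would use the standard Gaussian fact behind Proposition~\ref{prop-expcov}. Given $\mathcal{F}_{k-1}$, $x_k$ is Gaussian with mean $\hat{x}_k$ and covariance $P_k$, where $P_k$ comes from the Riccati recursion. The key observation is that $P_k$ is deterministic, i.e.\ independent of the realized measurements, because the Riccati recursion involves only $A_k,C_k,Q_k,R_k,P_0$. So $\EE[(x_k-\hat{x}_k)(x_k-\hat{x}_k)^\top\mid\mathcal{F}_{k-1}]=P_k$ almost surely. Conditioning down to $\sigma(\hat{x}_k)$ via the tower property yields $\EE[(x_k-\hat{x}_k)(x_k-\hat{x}_k)^\top\mid\hat{x}_k]=P_k$. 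Since the conditional mean given $\hat{x}_k$ vanishes by the first step, this second moment equals $\cov[x_k-\hat{x}_k\mid\hat{x}_k]$. A cleaner alternative is to show that the error $e_k=x_k-\hat{x}_k$ is independent of $\mathcal{F}_{k-1}$ (orthogonality plus joint Gaussianity), so its law given $\hat{x}_k$ is simply $\mathcal{N}(0,P_k)$.

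The main obstacle is justifying the measurability and Gaussianity when the inputs $u_k$ are chosen by an output-feedback policy, which may be nonlinear (e.g., an optimization-based safety filter). In that case $(x_k,\y_{0:k})$ is no longer jointly Gaussian. I would handle this with the standard separation argument. Decompose $x_k=\bar{x}_k+\tilde{x}_k$, where $\tilde{x}_k$ is the input-free part driven only by $x_0,w$, and $\bar{x}_k$ is the deterministic response to inputs that are $\mathcal{F}_{k-1}$-measurable. The estimation error $e_k$ then satisfies the input-independent recursion $e_{k+1}=(A_k-K_kC_k)e_k+w_k-K_kv_k$, which shows that the error process and the innovations coincide with those of the uncontrolled Gaussian system. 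Moreover, $\mathcal{F}_{k-1}$ equals the $\sigma$-algebra generated by the uncontrolled outputs. Establishing this equality of information sets by induction on $k$ is the step I expect to require the most care. After it, the independence of $e_k$ from $\mathcal{F}_{k-1}$ follows from Gaussian orthogonality, and both claims follow as above.
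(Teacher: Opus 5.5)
Your proposal is correct and uses the same mechanism as the paper's proof. In both, the conditional law of $x_k$ given the measurement history is $\mathcal{N}(\hat{x}_k,P_k)$ with $P_k$ deterministic (from the Riccati recursion), so conditioning down to $\sigma(\hat{x}_k)$ leaves the mean and covariance of $x_k-\hat{x}_k$ unchanged; the paper calls this the ``sufficient statistic'' step, and you do it with the tower property.

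Your version is more careful than the paper's in two places:
\begin{itemize}
\item You condition on $\y_{0:k-1}$, which is the right information for the one-step predictor. The paper's proof writes $x_k\mid\y_{0:k}\sim\mathcal{N}(\hat{x}_k,P_k)$, which is inconsistent with $\hat{x}_k$ and $P_k$ being prediction quantities.
\item You justify the Gaussian posterior under a nonlinear output-feedback policy via the separation argument, whereas the paper takes this for granted.
\end{itemize}
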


\begin{proof}
Under the linear-Gaussian setting, the Kalman filter yields the posterior distribution \vspace{-4pt}
\[
x_k\mid \y_{0:k}\sim\mathcal N(\hat x_k,P_k).  \vspace{-4pt}
\]
Since $P_k$ is known from the Riccati recursion, this Gaussian posterior is completely characterized by its mean $\hat x_k$. Hence, $\hat x_k$ is a sufficient statistic of $\y_{0:k}$ for the posterior distribution of $x_k$. Therefore, $x_k\mid \hat x_k\sim\mathcal N(\hat x_k,P_k)$, which implies \vspace{-3pt}
\[
x_k-\hat x_k\mid \hat x_k\sim\mathcal N(0,P_k).  \vspace{-3pt}
\]
The proof is thus completed.
\end{proof}

Based on the above result, now we present a Jensen-based tractable sufficient condition for the expectation-based CBF under Kalman-filtered state estimates.

\begin{proposition}\label{kalman-jensen}
    Consider system~\eqref{eq-sys} with Kalman filter based estimator \eqref{eq-kalman}. Consider a function $h:\RR^n\to\RR$ satisfying
    \begin{itemize}
        \item [i)] $h$ is twice-continuously differentiable;
        \item [ii)] there is $\lambda_{\max}\!>\!0$ such that $\sup_{x\in\RR^n}\!\|\nabla^2 h(x)\|_2\!\leq\!\lambda_{\max}$.
    \end{itemize}
    If there exists $c_J'\geq0$ such that \vspace{-3pt}
    \begin{equation}\label{eq-kalmancbf}
        h(A_k\hat{x}_k+B_ku_k)-c_J'\geq \alpha h(\hat{x}_k),~\forall x\in\RR^n \vspace{-3pt}
    \end{equation}
    then we have \vspace{-3pt}
    \begin{equation}
        \EE\left[ h(\hat{x}_{k+1})\mid \hat{x}_k \right]\geq \alpha h(\hat{x}_k)+\delta' \vspace{-3pt}
    \end{equation}
    where \vspace{-3pt}
    \begin{flalign}\label{eq-delta}
        \delta'\leq c_J'-\frac{\lambda_{\max}}{2}\text{tr}(K_k\text{cov}[v_k]K_k^\top) \!-\!\frac{\lambda_{\max}}{2}\text{tr}(K_kC_kP_kC_k^\top K_k^\top) \vspace{-3pt}
    \end{flalign}
\end{proposition}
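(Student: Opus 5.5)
The approach is to write the Kalman update \eqref{eq-kalman} as a deterministic part plus a zero-mean random innovation term. Then we apply Lemma~\ref{lem-jensen} conditionally on $\hat{x}_k$ and finish with the deterministic condition \eqref{eq-kalmancbf}. Substituting $y_k=C_kx_k+v_k$ into \eqref{eq-kalman} gives
\begin{equation}
\hat{x}_{k+1}=A_k\hat{x}_k+B_ku_k+K_kC_k(x_k-\hat{x}_k)+K_kv_k ,\notag
\end{equation}
so $\hat{x}_{k+1}=\mu_k+\eta_k$. Here $\mu_k:=A_k\hat{x}_k+B_ku_k$ is determined by $\hat{x}_k$, since the safety-filter input $u_k$ is computed from the available information, and $\eta_k:=K_kC_ke_k+K_kv_k$ with $e_k:=x_k-\hat{x}_k$.

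First I compute the conditional moments of $\eta_k$ given $\hat{x}_k$. By Lemma~\ref{lem-kalman-cov}, $\EE[e_k\mid\hat{x}_k]=0$ and $\cov[e_k\mid\hat{x}_k]=P_k$. By Assumption~\ref{ass-gaussian}, $v_k$ is zero mean and independent of $(x_k,\y_{0:k-1})$, hence of $(e_k,\hat{x}_k)$; this uses that $\hat{x}_k$ is the one-step predictor built from $y_0,\dots,y_{k-1}$. It follows that $\EE[\eta_k\mid\hat{x}_k]=0$, so $\EE[\hat{x}_{k+1}\mid\hat{x}_k]=\mu_k$. The cross-covariance vanishes, so
\begin{equation}
\cov[\hat{x}_{k+1}\mid\hat{x}_k]=K_kC_kP_kC_k^\top K_k^\top+K_k\cov[v_k]K_k^\top .\notag
\end{equation}
Because everything is Gaussian, the finite first and second moment requirements of Lemma~\ref{lem-jensen}\,iii) hold automatically.

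Next I apply Lemma~\ref{lem-jensen} to $h$ and the random variable $\hat{x}_{k+1}$ under the conditional law given $\hat{x}_k$. Its proof goes through verbatim for conditional expectations, using hypotheses i) and ii). This yields
\begin{equation}
\EE[h(\hat{x}_{k+1})\mid\hat{x}_k]\geq h(\mu_k)-\tfrac{\lambda_{\max}}{2}\text{tr}(K_k\cov[v_k]K_k^\top)-\tfrac{\lambda_{\max}}{2}\text{tr}(K_kC_kP_kC_k^\top K_k^\top),\notag
\end{equation}
where linearity of the trace splits the two terms. Condition \eqref{eq-kalmancbf} gives $h(\mu_k)\geq\alpha h(\hat{x}_k)+c_J'$. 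Combining the two bounds, the right-hand side is at least $\alpha h(\hat{x}_k)+\delta'$ for any $\delta'$ satisfying \eqref{eq-delta}.

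The main obstacle is the independence and zero cross-covariance step. It requires being careful that $v_k$ is independent of $e_k$ conditioned on $\hat{x}_k$ and not just unconditionally. It also requires that $u_k$ be $\sigma(\hat{x}_k)$-measurable, or at least treated as fixed given $\hat{x}_k$. I would make this explicit by noting that $\hat{x}_k$ and $e_k$ are functions of $(x_0,w_{0:k-1},v_{0:k-1})$, which are independent of $v_k$.
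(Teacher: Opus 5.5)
Your proposal is correct and follows the paper's proof step for step: you split $\hat{x}_{k+1}$ into $A_k\hat{x}_k+B_ku_k$ plus the innovation $K_kC_k(x_k-\hat{x}_k)+K_kv_k$, apply Lemma~\ref{lem-jensen} conditionally on $\hat{x}_k$ using the moments from Lemma~\ref{lem-kalman-cov}, and close with \eqref{eq-kalmancbf}. Your explicit argument that $v_k$ is independent of $(x_k-\hat{x}_k,\hat{x}_k)$ fills in a step the paper's proof leaves implicit: it makes the cross-covariance vanish, so the two trace terms simply add.
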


\begin{proof}
By Lemma~\ref{lem-jensen}, we have
    \begin{flalign}
        &\EE[h(\hat{x}_{k+1})\mid\hat{x}_k]\notag\\
        &=\EE[h(A_k\hat{x}_k+B_ku_k+K_kv_k+K_k C_k(x_k-\hat{x}_k))\mid\hat{x}_k]\notag\\
        &\geq h(\EE[A_k\hat{x}_k+B_ku_k+K_kv_k+K_k C_k(x_k-\hat{x}_k)\mid \hat{x}_k])\notag\\
        &-\!\frac{\lambda_{\max}}{2}\text{tr}(\text{cov}(A_k\hat{x}_k\!+\!B_ku_k\!+\!K_kv_k\!+\!K_k C_k(x_k\!-\!\hat{x}_k)\mid \hat{x}_k))\notag
    \end{flalign}
Given $\hat{x}_k$, $A_k\hat{x}_k+B_ku_k$ is deterministic instead of a random variable. Therefore, we have \vspace{-3pt}
\begin{flalign}
    \EE[A_k\hat{x}_k+B_ku_k+K_kv_k\mid \hat{x}_k] &= A_k\hat{x}_k+B_ku_k\notag\\
    \cov[A_k\hat{x}_k+B_ku_k+K_kv_k\mid \hat{x}_k] &= A_k\hat{x}_k+B_ku_k\notag
\end{flalign}
By Proposition~\ref{prop-expcov} and Lemma~\ref{lem-kalman-cov}, we know that \vspace{-3pt}
\begin{equation}
     \EE[K_kC_k(x_k-\hat{x}_k)\mid \hat{x}_k] = 0\notag \vspace{-3pt}
\end{equation}
\begin{equation}
    \text{cov}[K_k C_k(x_k-\hat{x}_k)\mid \hat{x}_k]=K_kC_kP_kC_k^\top K_k^\top\notag \vspace{-3pt}
\end{equation}
Then we have \vspace{-3pt}
    \begin{flalign}
        &\EE[h(\hat{x}_{k+1})\mid\hat{x}_k]\geq h(A_k\hat{x}_k+B_ku_k)  \notag\\
        &-\frac{\lambda_{\max}}{2}\text{tr}(K_k\text{cov}[v_k]K_k^\top)-\frac{\lambda_{\max}}{2}\text{tr}(K_kC_kP_kC_k^\top K_k^\top )\notag\\
        &=h(A_k\hat{x}_k+B_ku_k)-c_J'+\delta'\notag\\
        &\geq \alpha h(\hat{x}_k)+\delta'\notag \vspace{-3pt}
    \end{flalign}
    The proof is thus completed.
\end{proof}

Finally, we present the main result, which combines the Kalman filter-based state estimation and the Jensen-based CBF reformulation to obtain an output-feedback controller with a probabilistic safety guarantee.

\begin{theorem}\label{thm}
    Consider system~\eqref{eq-sys} with Kalman filter~\eqref{eq-kalman}. Suppose $h:\RR^n\to\RR$ satisfies
    \begin{itemize}
        \item [i)] $h$ is twice-continuously differentiable;
        \item [ii)] there is $\lambda_{\max}\!>\!0$ such that $\sup_{x\in\RR^n}\!\|\nabla^2 h(x)\|_2\!\leq\!\lambda_{\max}$;
        \item [iii)] $h(x)\leq M,~\forall x\in\RR^n,~M>0$;
    \end{itemize}
    Given an initial state $x_0$, a nominal controller $\hat{\u}_{0:T-1}$ for a finite horizon $T\!\in\!\NN$, and $\hat{h}:\RR^n\to\RR$ defined in \eqref{eq-hhat}, the following controller \vspace{-5pt} 
    \begin{subequations}\label{eq-thm}
    \begin{align}
        &\u_{0:T-1}^*=\argmin_{u_t\in\RR^m}\sum_{k=0}^{T-1}\|u_k-\hat{u}_k\|_2^2\\
        \text{s.t. }& \hat{h}(A_k\hat{x}_k+B_ku_k)-c_J'\geq\alpha\hat{h}(\hat{x}_k)\label{eq-jensenhat}\\
        & 0\leq c_J'\leq (M-h_\gamma)(1-\alpha)+\frac{\lambda_{\max}}{2}\text{tr}(K_k\text{cov}[v_k]K_k^\top)\notag\\
        &~~~~~~~~~~~+\frac{\lambda_{\max}}{2}\text{tr}(K_kC_kP_kC_k^\top K_k^\top)\label{eq-cjprime}\\
        &0<\alpha<1\label{eq-alpha} \vspace{-3pt}
    \end{align}
    \end{subequations}
    makes the following hold: \vspace{-3pt}
    \begin{equation}\label{eq-Psgeq}
        P_s(\xi(x_0,\u_{0:T-1}^*))\geq 1- P_e^{\hat{x}}-\sigma \vspace{-3pt}
    \end{equation}
    where $P_e^{\hat{x}}$ satisfies
    \begin{itemize}
        \item if $\delta'<0$, we have \vspace{-3pt}
        \begin{equation}\label{eq-Pehat1}
        \!\!\!\!\!\!\!\!\!\!\!\!P_e^{\hat{x}}\!\leq\! \left(\!\frac{M\!-\!h_\gamma\!-\!\hat{h}(\hat{x}_0)}{M\!-\!h_\gamma}\!\right)\alpha^T\!+\!\frac{(M\!-\!h_\gamma)(1\!-\!\alpha)-\delta'}{M\!-\!h_\gamma}\!\sum_{i=1}^T\alpha^{i-1} \vspace{-3pt}
    \end{equation}
        \item if $\delta'\geq 0$, we have \vspace{-3pt}
        \begin{equation}\label{eq-Pehat2}
         P_e^{\hat{x}} \leq 1- \frac{\hat{h}(\hat{x}_0)}{M-h_\gamma}\left(\frac{(M-h_\gamma)\alpha+\delta'}{M-h_\gamma}\right)^T \vspace{-3pt}
    \end{equation}
    \end{itemize}
    with $\delta'$ defined in \eqref{eq-delta}.
\end{theorem}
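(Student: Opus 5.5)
The plan is to chain three earlier results: Proposition~\ref{kalman-jensen}, applied to $\hat h$; Lemma~\ref{lem-expectationcbf}, applied to the estimator dynamics with $\hat h$ as barrier; and Proposition~\ref{coro-kalman}, which transfers the estimate-level bound to the true state.

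\textbf{Step 1: regularity of $\hat h$.} Since $\hat h = h - h_\gamma$ differs from $h$ only by a constant, it inherits the required regularity:
\begin{itemize}
\item it is twice continuously differentiable with $\nabla^2\hat h=\nabla^2 h$, so the same $\lambda_{\max}$ works;
\item it is bounded above by $\hat M := M - h_\gamma$.
\end{itemize}
Positivity of $\hat M$ needs a short argument. If $h^{-1}(\{0\})$ is nonempty, then $h_\gamma\ge 0$, and the safety problem is only meaningful when $\hat h(\hat x_0)\ge 0$, i.e.\ $h_\gamma < M$. I will state this as the standing nondegeneracy implicit in the theorem.

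\textbf{Step 2: expectation condition for the estimator.} Applying Proposition~\ref{kalman-jensen} with $h$ replaced by $\hat h$, constraint~\eqref{eq-jensenhat} gives, at every $k$ along the closed loop driven by $\u^*_{0:T-1}$,
\[
\EE[\hat h(\hat x_{k+1})\mid \hat x_k]\ge \alpha \hat h(\hat x_k)+\delta',
\]
with $\delta'$ from \eqref{eq-delta}, taken with equality. Lemma~\ref{lem-kalman-cov} is what justifies the conditional mean and covariance computations used there. Next, constraint~\eqref{eq-cjprime} is exactly equivalent to $\delta'\le \hat M(1-\alpha)$, which is the admissibility requirement on $\delta$ in Lemma~\ref{lem-expectationcbf} with $M$ replaced by $\hat M$.

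\textbf{Step 3: apply Lemma~\ref{lem-expectationcbf}.} This is where the main obstacle lies. The lemma is stated for the true-state dynamics~\eqref{eq-state} with conditioning on $x_k$. Here the relevant process is $\hat x_k$, which evolves as
\[
\hat x_{k+1}=A_k\hat x_k+B_ku_k+\eta_k,\qquad \eta_k := K_k\bigl(C_k(x_k-\hat x_k)+v_k\bigr).
\]
The proof of that lemma in \cite{cosner2023robust} is a supermartingale / Ville-type argument on $\hat M - \hat h(\hat x_k)$. That argument only uses the one-step conditional expectation inequality and the upper bound $\hat M$. Lemma~\ref{lem-kalman-cov} shows the innovation $\eta_k$ has zero conditional mean given $\hat x_k$, with known covariance.

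One further check is needed, since the argument actually conditions on the full past $\hat x_{0:k}$ (the natural filtration). Because $\hat x_k$ is a sufficient statistic of $\y_{0:k}$ for the posterior of $x_k$, and $v_k$ is independent of the past, the inequality conditioned on $\hat x_k$ also holds conditioned on the filtration generated by $\hat x_{0:k}$. I would verify this via the Markov property of the Kalman predictor: its innovation sequence is white and Gaussian.

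With this in hand, the lemma applied over horizon $T$, with $M\to\hat M$, $h\to\hat h$, $x\to\hat x$ and $\delta\to\delta'$, yields \eqref{eq-Pehat1} in the case $\delta'<0$ and \eqref{eq-Pehat2} in the case $\delta'\ge0$.

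\textbf{Step 4: transfer to the true state.} Proposition~\ref{coro-kalman}, with $\gamma$ chosen from Proposition~\ref{prop-kalman} for the given $\sigma$ and $K=T$, gives
\[
P_s^x\ge 1-P_e^{\hat x}-\sigma,
\]
which is \eqref{eq-Psgeq}. Substituting the bounds from Step 3 completes the argument.
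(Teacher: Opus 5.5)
Your proposal is correct and follows the same chain as the paper's proof. Proposition~\ref{kalman-jensen} applied to $\hat h$ gives the one-step expectation inequality for $\hat x_k$; \eqref{eq-cjprime} yields $\delta'\le (M-h_\gamma)(1-\alpha)$, so Lemma~\ref{lem-expectationcbf} applies with $M-h_\gamma$ in place of $M$; and Proposition~\ref{coro-kalman} transfers the bound to the true state. Your extra checks are left implicit in the paper and only make the same argument more rigorous: that $M-h_\gamma>0$ must be assumed, and that the one-step inequality holds with respect to the predictor's full filtration (by independence of the innovations from the past), so the supermartingale argument behind Lemma~\ref{lem-expectationcbf} carries over from $x_k$ to $\hat x_k$.
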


\begin{proof}
    By Proposition~\ref{kalman-jensen}, we know that, under conditions i) and ii) and \eqref{eq-jensenhat}--\eqref{eq-alpha}, we have \vspace{-3pt}
    \begin{equation}
        \EE\left[ \hat{h}(\hat{x}_{k+1})\mid \hat{x}_k \right]\geq \alpha \hat{h}(\hat{x}_k)+\delta' \vspace{-3pt}
    \end{equation}
   Under condition iii),  by Lemma~\ref{lem-expectationcbf} in which $\hat{h}$ satisfies $\hat{h}(x)\leq M-h_\gamma,~\forall x\in\RR^n$ and  $\delta'$ in \eqref{eq-delta} with \eqref{eq-cjprime} satisfies \vspace{-3pt}
   \begin{equation}
       \delta'\leq (M-h_\gamma)(1-\alpha) \vspace{-3pt}
   \end{equation}
   we know that both \eqref{eq-Pehat1} and \eqref{eq-Pehat2} hold. 
   Then, by Proposition~\ref{coro-kalman}, we know that \eqref{eq-Psgeq} holds.
\end{proof}

\begin{remark}
Beyond the conditions given in Theorem~\ref{thm}, if we further assume that $h$ is concave, then the constraints induced by \eqref{eq-jensenhat} are convex with respect to the decision variables. Therefore, the optimization problem in \eqref{eq-thm} is convex. This ensures that any feasible local optimum is also a global optimum, and enables the use of efficient convex optimization algorithms with stronger guarantees on numerical reliability and computational tractability.
Nevertheless, if $h$ is not concave, the optimization problems may become nonconvex. In this case, Theorem~\ref{thm} still holds as long as the corresponding optimization problems are solved and their constraints are satisfied. However, one can no longer guarantee that the obtained solution is globally optimal, and the numerical solution may depend on specific solvers used.
\end{remark}

\section{Simulation}\label{sec:simulation}

In this section, we validate our approach through two sets of simulations. We first compare our method with the Kalman filter-based worst-case CVaR-CBF approach \cite{kishida2024risk}, with emphasis on computational time under matched empirical safety probabilities. We then adopt the inverted-pendulum example from \cite{cosner2023robust} to examine the proposed probabilistic safety bound, showing that the theoretical bound is providing a reliable characterization of the actual safety performance.

\subsection{Computational Comparison}

We consider the same linear system with \cite{kishida2024risk} \vspace{-3pt}
\begin{subequations}
    \begin{align}
        x_{t+1}&=\begin{bmatrix}
            1&0.05\\
            0&1
        \end{bmatrix}x_t+\begin{bmatrix}
            0.0125\\
            0.05
        \end{bmatrix}u_t+w_t\\
        y_t&=\begin{bmatrix}
        0&1
    \end{bmatrix}x_t+v_t \vspace{-5pt}
    \end{align} 
\end{subequations}
The disturbance and noise covariances are
\begin{flalign}
    Q=\begin{bmatrix}
        7.66\times 10^{-5}&3.06\times 10^{-3}\\
        3.06\times 10^{-3}&1.23\times 10^{-1}
    \end{bmatrix},\notag~R=0.09\notag \vspace{-5pt}
\end{flalign}
The nominal controller is given by \vspace{-3pt}
\begin{equation}
    u(x)=-15x_1-5x_2. \vspace{-3pt}\notag
\end{equation}

Since the Kalman filter-based CVaR-CBF approach proposed in \cite{kishida2024risk} is developed only for half-space and ellipsoidal safety sets, in what follows, we focus our comparison on these two types of safety constraints.

\textbf{Comparisons under the half-plane safety constraint.} We first consider the half-space safety set  \vspace{-3pt}
\begin{equation}
    \mathcal{C}_1=\{x\in\RR^2:h(x)=\begin{bmatrix}
        0.4&0.4
    \end{bmatrix}x+1\geq 0\}, \vspace{-3pt}\notag
\end{equation}
since the simulation example in \cite{kishida2024risk} is conducted only for a half-space safety constraint.
We adopt the initial conditions
$x_0=\begin{bmatrix}
    7&0
\end{bmatrix}^\top$, $P_0=Q$
and the parameters $\alpha=0.7$, $\epsilon=0.3$ from \cite{kishida2024risk} to reproduce its simulation setting. We then evaluate the two approaches over 100 Monte Carlo trials by generating 100 realizations of the stochastic disturbances and measurement noises, and record the proportion of trajectories that remain safe over the entire horizon as an estimate of $P_s$, which we denote by $\hat{P}_s$. For the method in \cite{kishida2024risk}, we obtain $\hat{P}_s=0.76$.

\begin{figure}
    \centering
    \includegraphics[width=\linewidth]{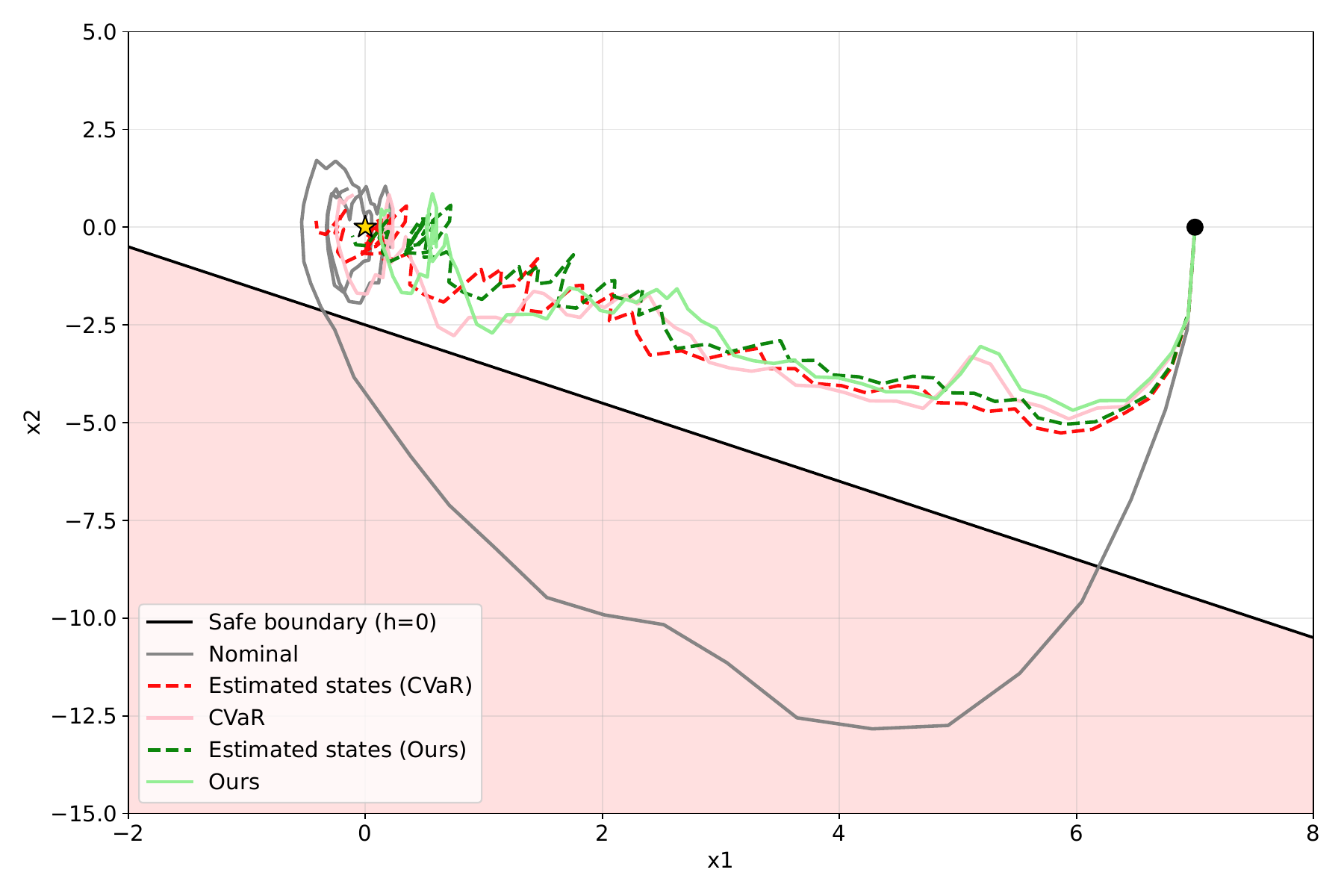}
    \caption{One realization of the closed-loop trajectories by the two methods under the half-space safety constraint. Both methods achieve the same empirical safety probability $\hat{P}_s=0.76$, with parameters $\alpha=0.7$ and $\epsilon=0.3$ for the method in \cite{kishida2024risk}, and $\alpha=0.7$ and $c_J'=0.115\times c_J^{\max}$ for our approach.} \vspace{-6pt}
    \label{fig:halfplane-comparison}
\end{figure}

\begin{figure*}[htbp]
    \centering
    \begin{subfigure}[b]{0.24\textwidth}
        \centering
        \includegraphics[width=\textwidth]{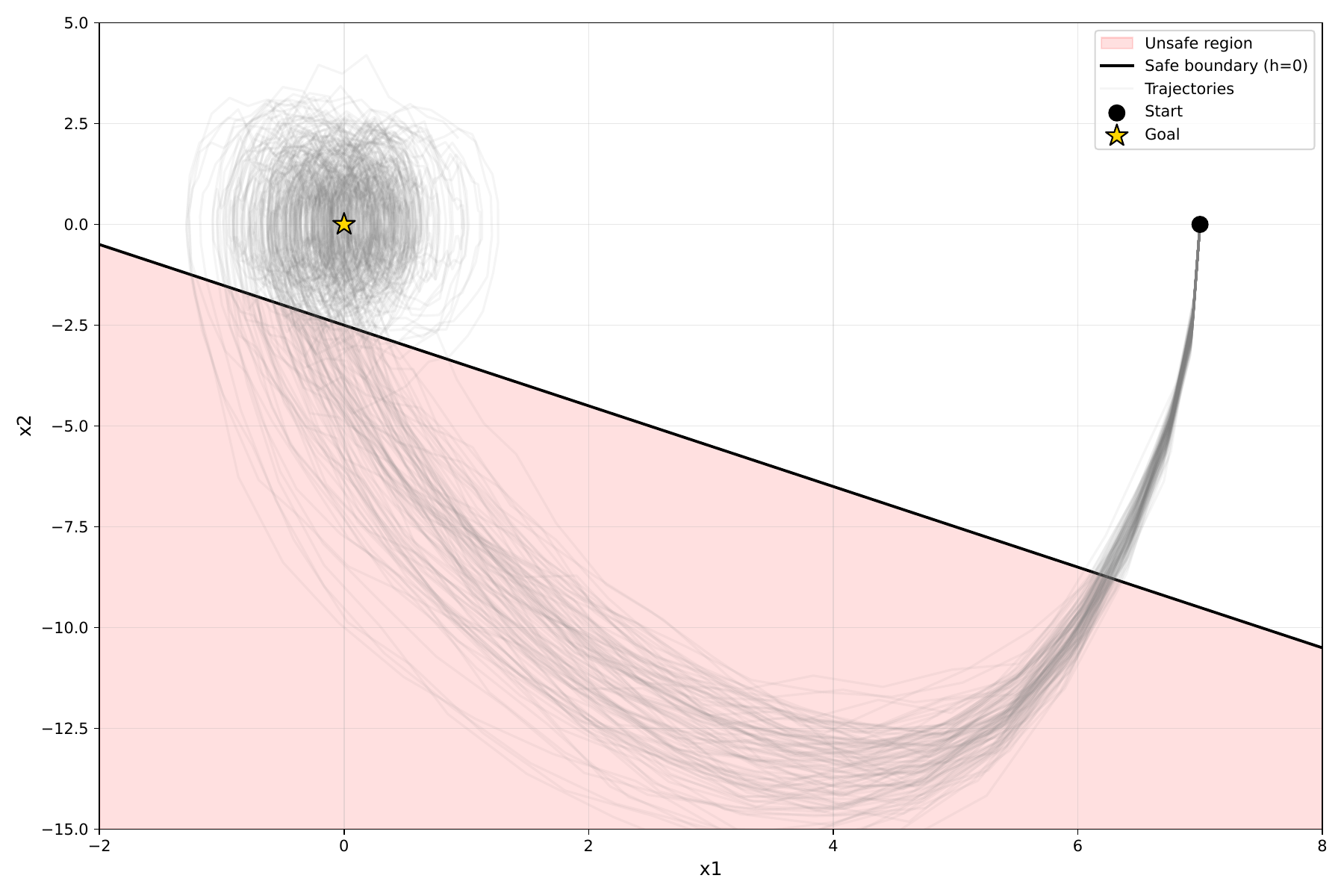}
        \caption{Nominal}
        \label{fig:nominal-halfplan}
    \end{subfigure}
    \hfill
    \begin{subfigure}[b]{0.24\textwidth}
        \centering
        \includegraphics[width=\textwidth]{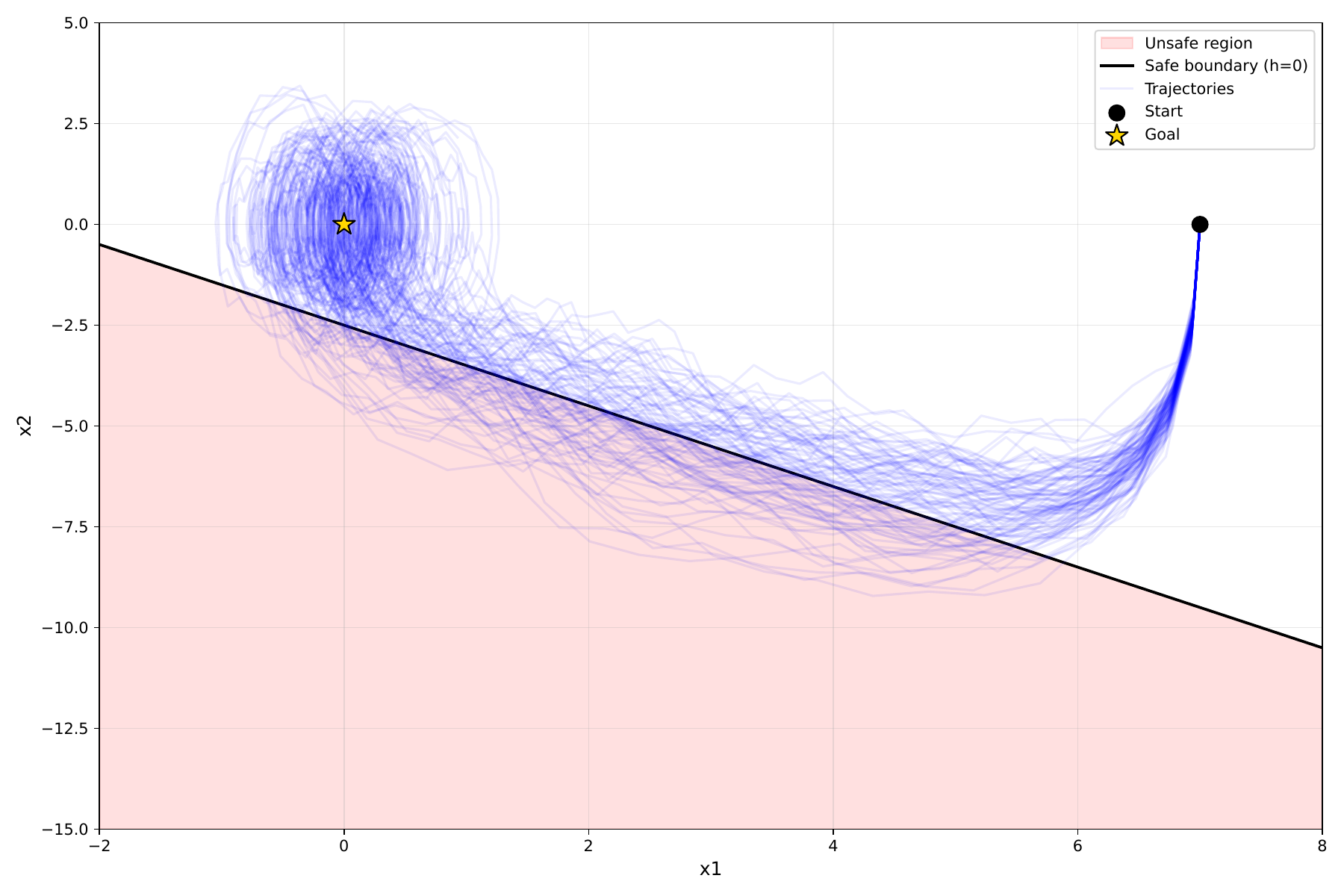}
        \caption{Nominal CBF}
        \label{fig:nominal_cbf-halfplan}
    \end{subfigure}
    \hfill
    \begin{subfigure}[b]{0.24\textwidth}
        \centering
        \includegraphics[width=\textwidth]{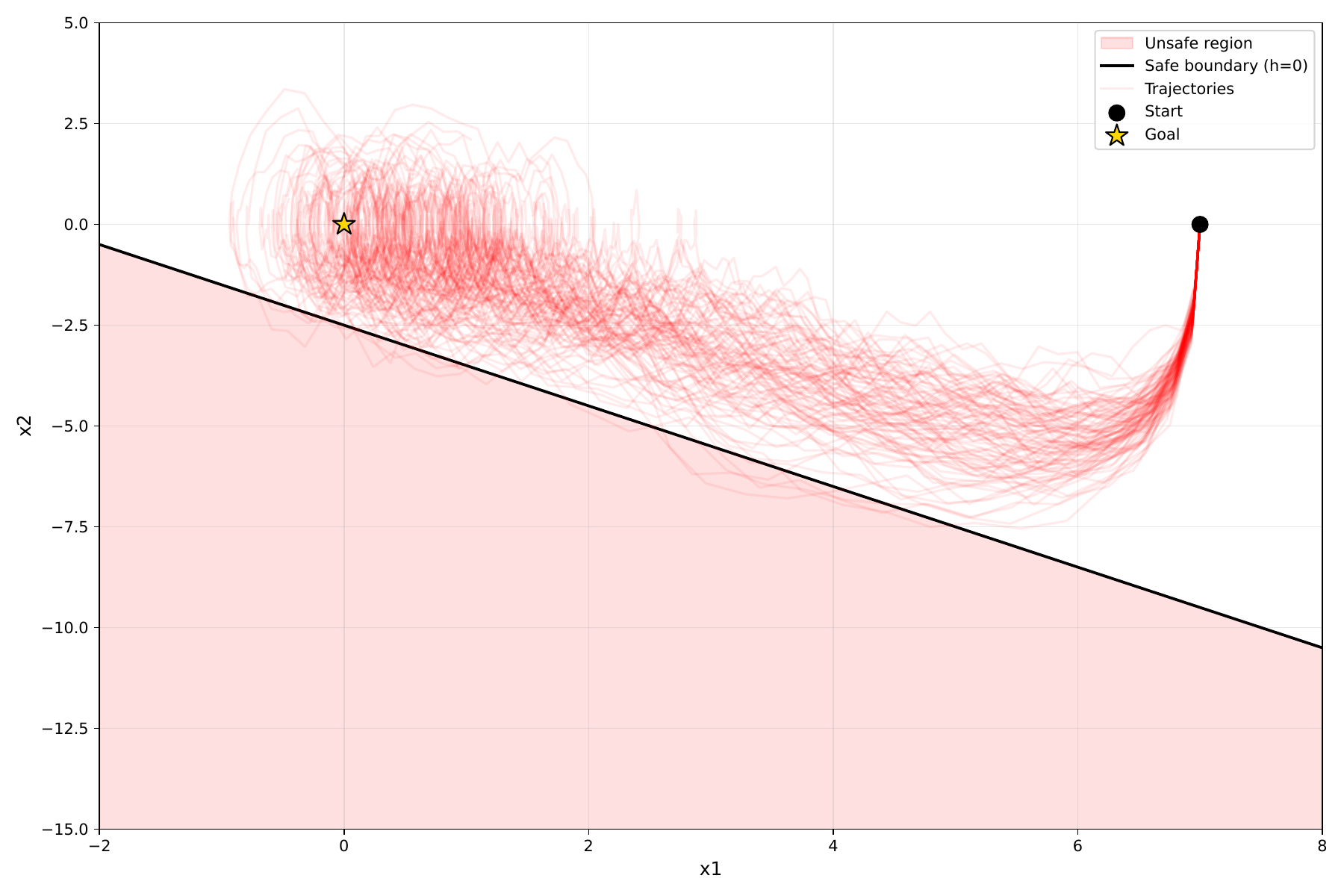}
        \caption{Kalman-CVaR-CBF}
        \label{fig:cvar-halfplan}
    \end{subfigure}
    \hfill
    \begin{subfigure}[b]{0.24\textwidth}
        \centering
        \includegraphics[width=\textwidth]{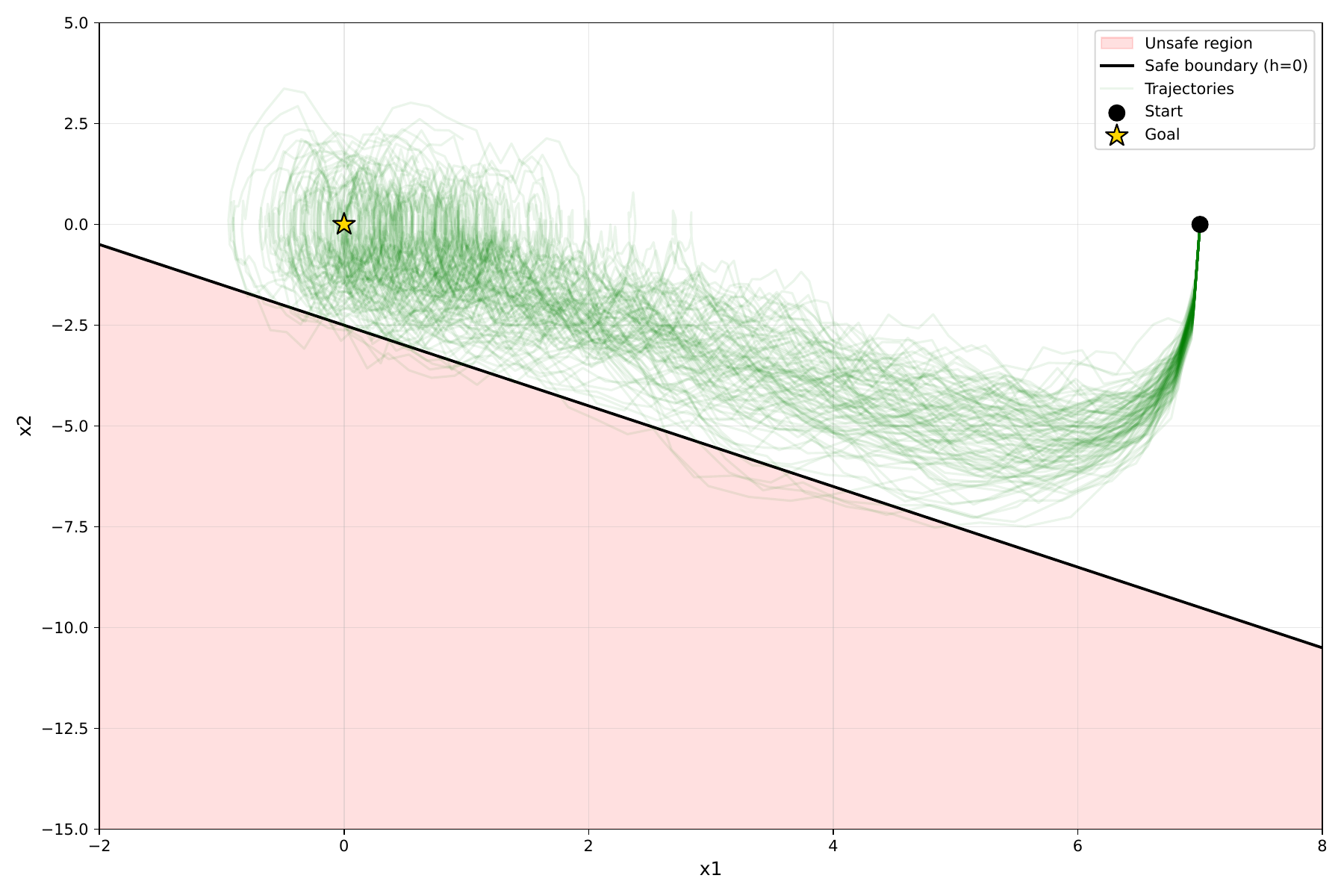}
        \caption{Ours}
        \label{fig:scbf-halfplan}
    \end{subfigure}

    \centering
    \begin{subfigure}[b]{0.24\textwidth}
        \centering
        \includegraphics[width=\textwidth]{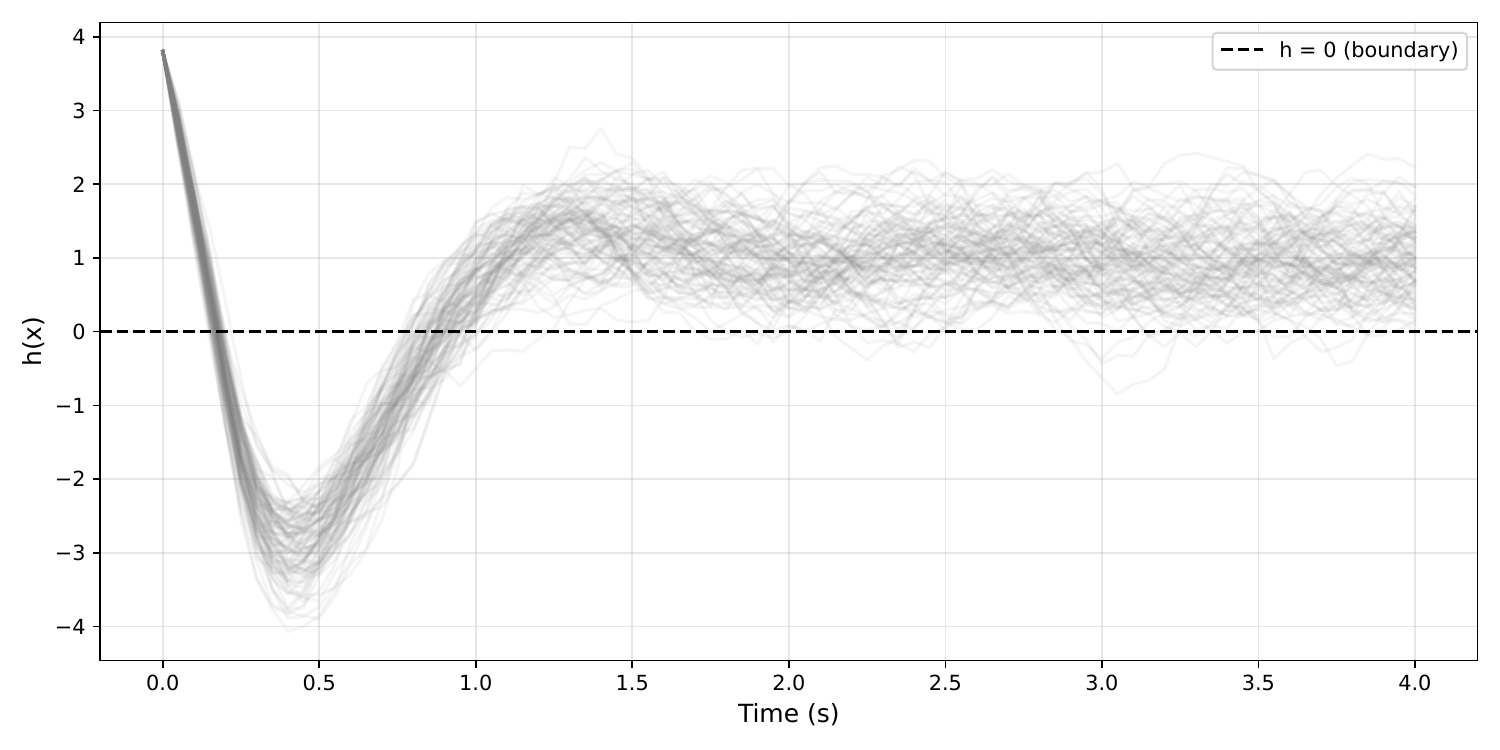}
        \caption{$h(x_k)$: Nominal}
        \label{fig:nominal-h}
    \end{subfigure}
    \hfill
    \begin{subfigure}[b]{0.24\textwidth}
        \centering
        \includegraphics[width=\textwidth]{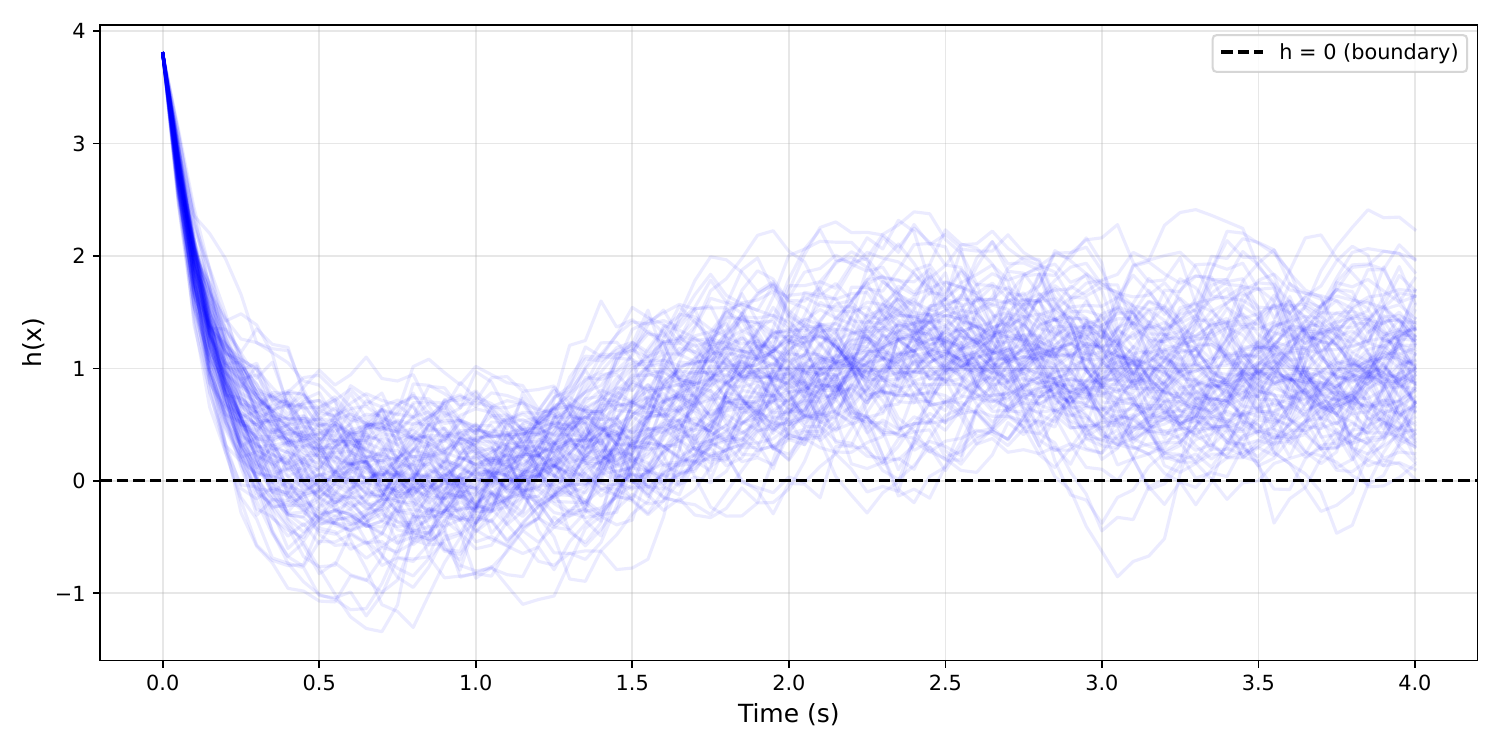}
        \caption{$h(x_k)$: Nominal CBF}
        \label{fig:nominal_cbf-h}
    \end{subfigure}
    \hfill
    \begin{subfigure}[b]{0.24\textwidth}
        \centering
        \includegraphics[width=\textwidth]{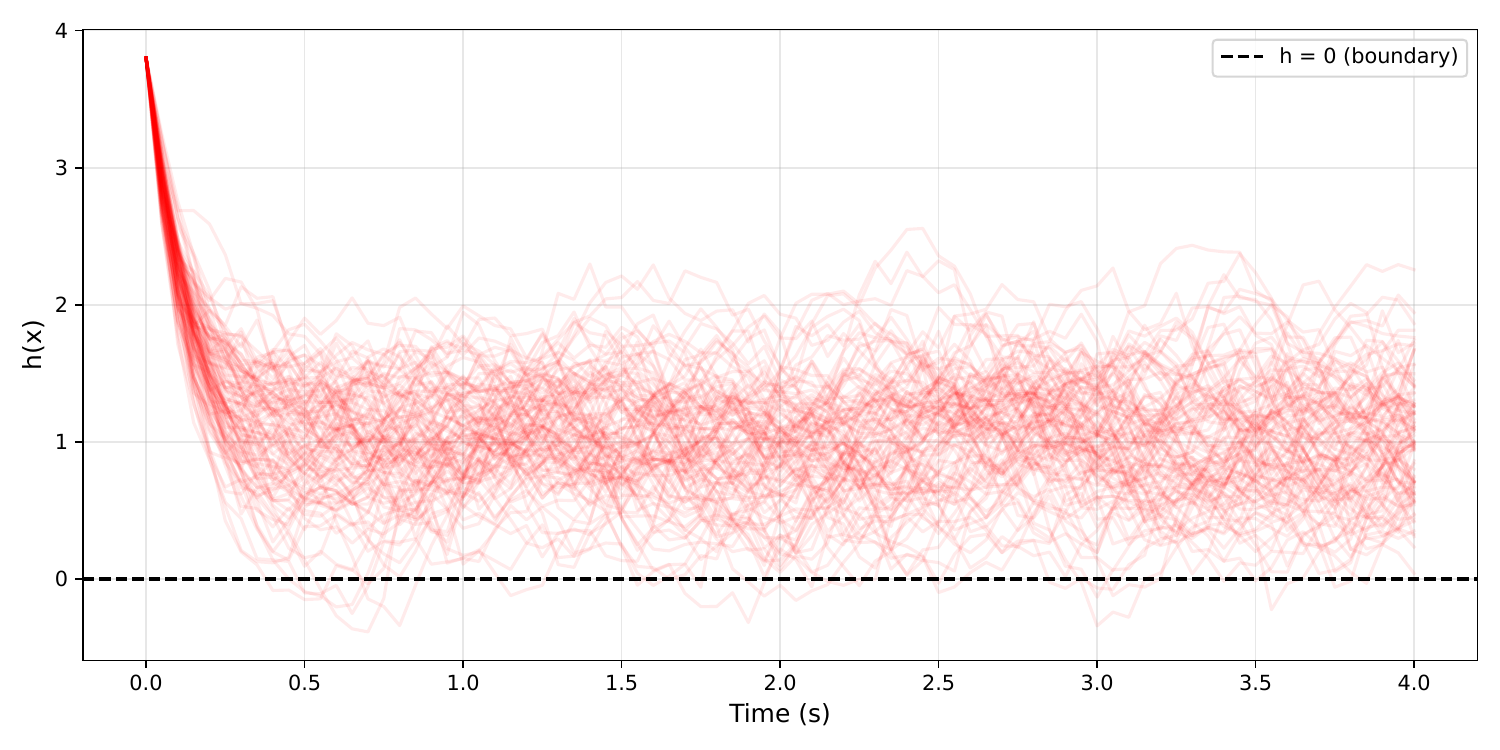}
        \caption{$h(x_k)$: Kalman-CVaR-CBF}
        \label{fig:cvar-h}
    \end{subfigure}
    \hfill
    \begin{subfigure}[b]{0.24\textwidth}
        \centering
        \includegraphics[width=\textwidth]{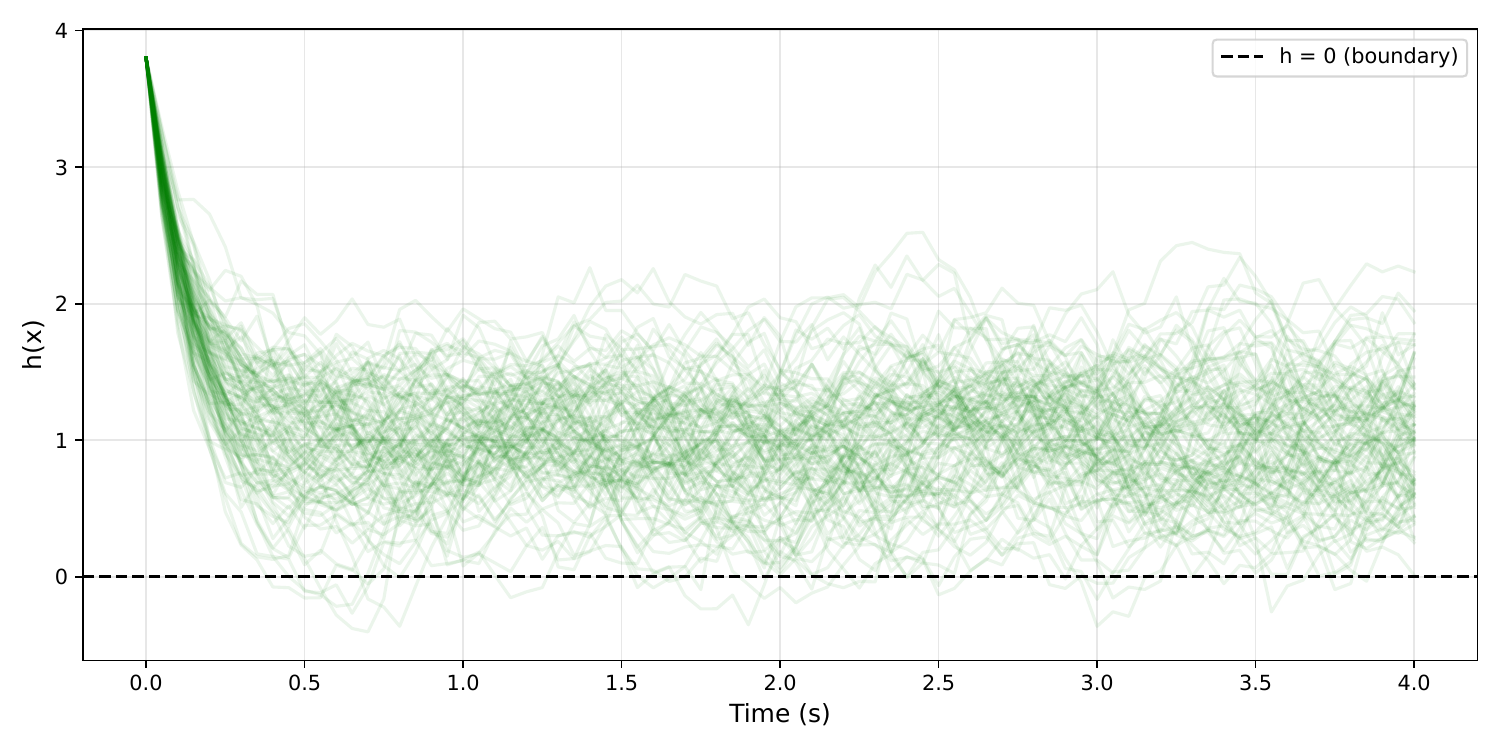}
        \caption{$h(x_k)$: Ours}
        \label{fig:scbf-h}
    \end{subfigure}
    
    \caption{Closed-loop state trajectories and  evolution of the safety function $h(x_k)$ over 100 Monte Carlo trials under the half-space safety constraint. The empirical safety probabilities of the four controllers are $0$, $0.05$, $0.76$, and $0.76$, respectively.}
    \label{fig:halfplan_all}
    
\end{figure*}

\begin{figure*}[htbp]
    \centering
    \begin{subfigure}[b]{0.24\textwidth}
        \centering
        \includegraphics[width=\textwidth]{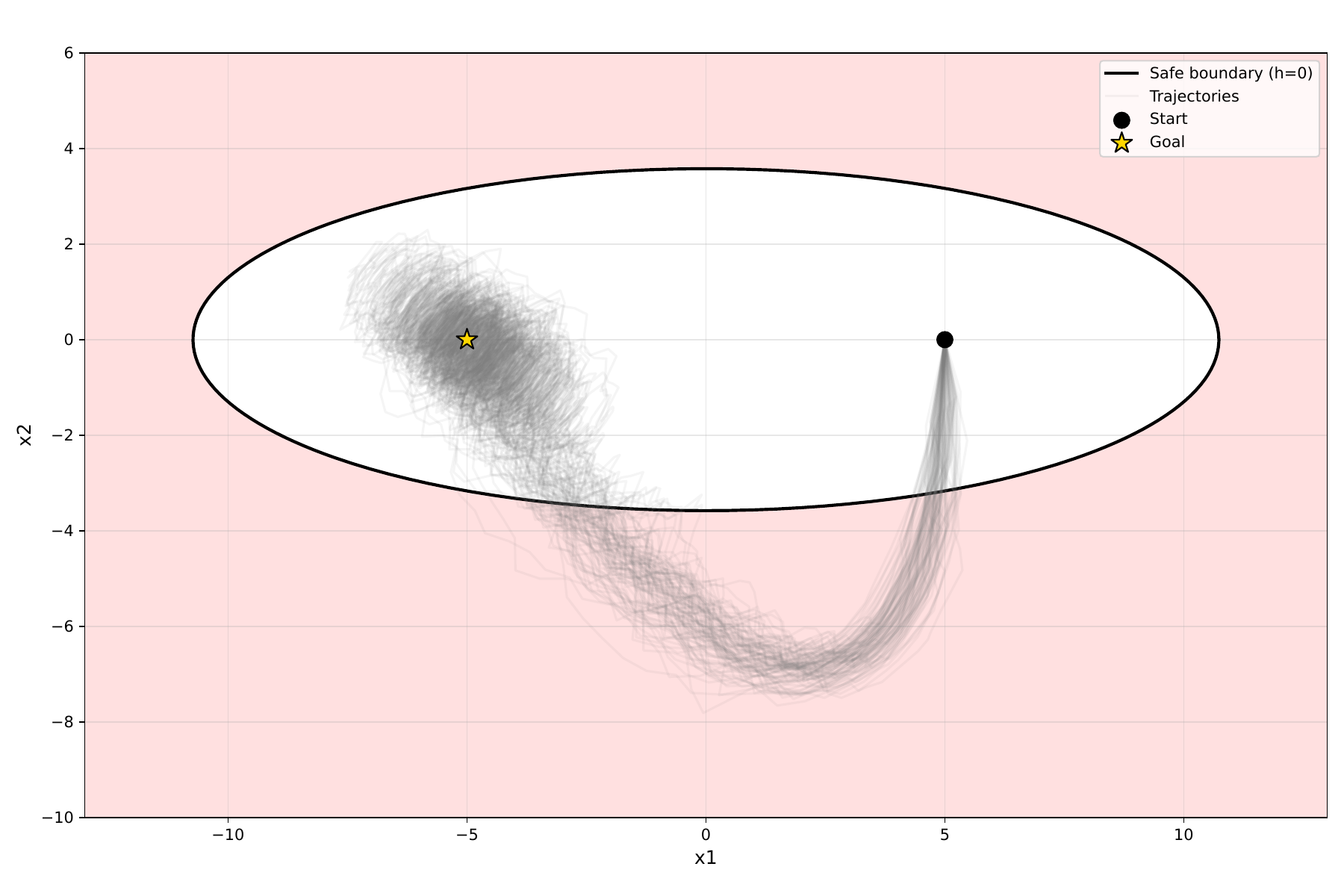}
        \caption{Nominal}
        \label{fig:nominal-oval}
    \end{subfigure}
    \hfill
    \begin{subfigure}[b]{0.24\textwidth}
        \centering
        \includegraphics[width=\textwidth]{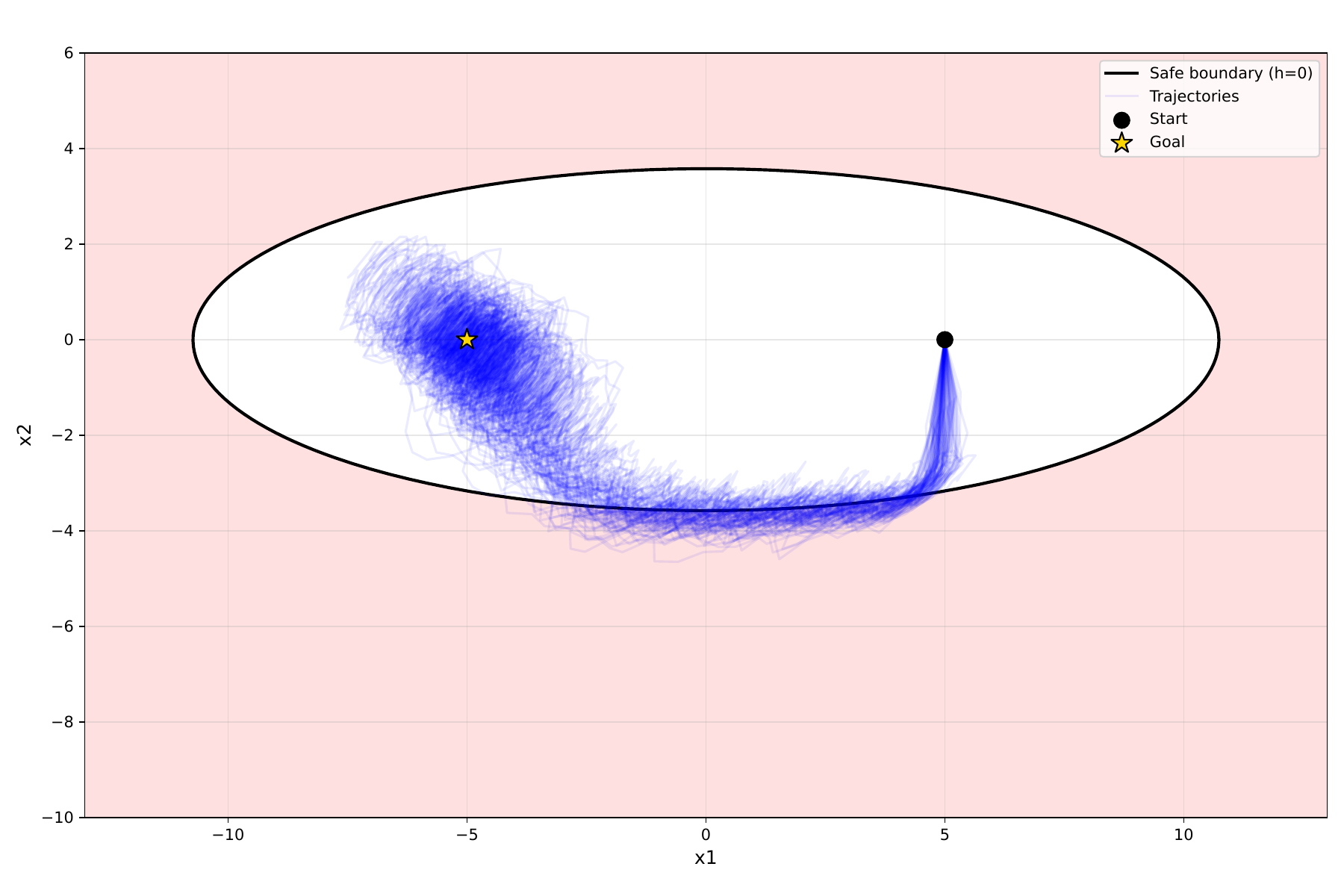}
        \caption{Nominal CBF}
        \label{fig:nominal_cbf-oval}
    \end{subfigure}
    \hfill
    \begin{subfigure}[b]{0.24\textwidth}
        \centering
        \includegraphics[width=\textwidth]{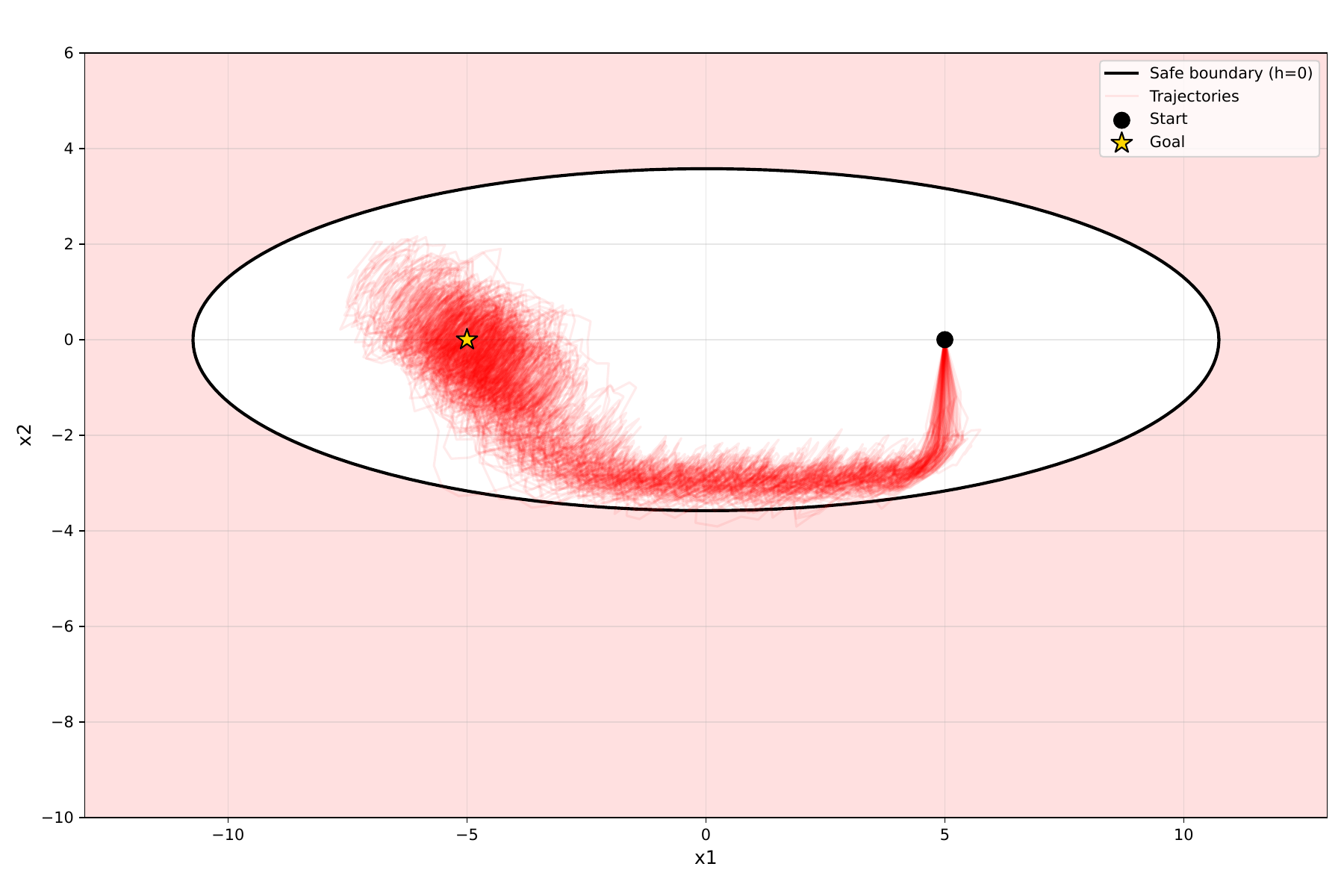}
        \caption{Kalman-CVaR-CBF}
        \label{fig:scbf-oval}
    \end{subfigure}
    \hfill
    \begin{subfigure}[b]{0.24\textwidth}
        \centering
        \includegraphics[width=\textwidth]{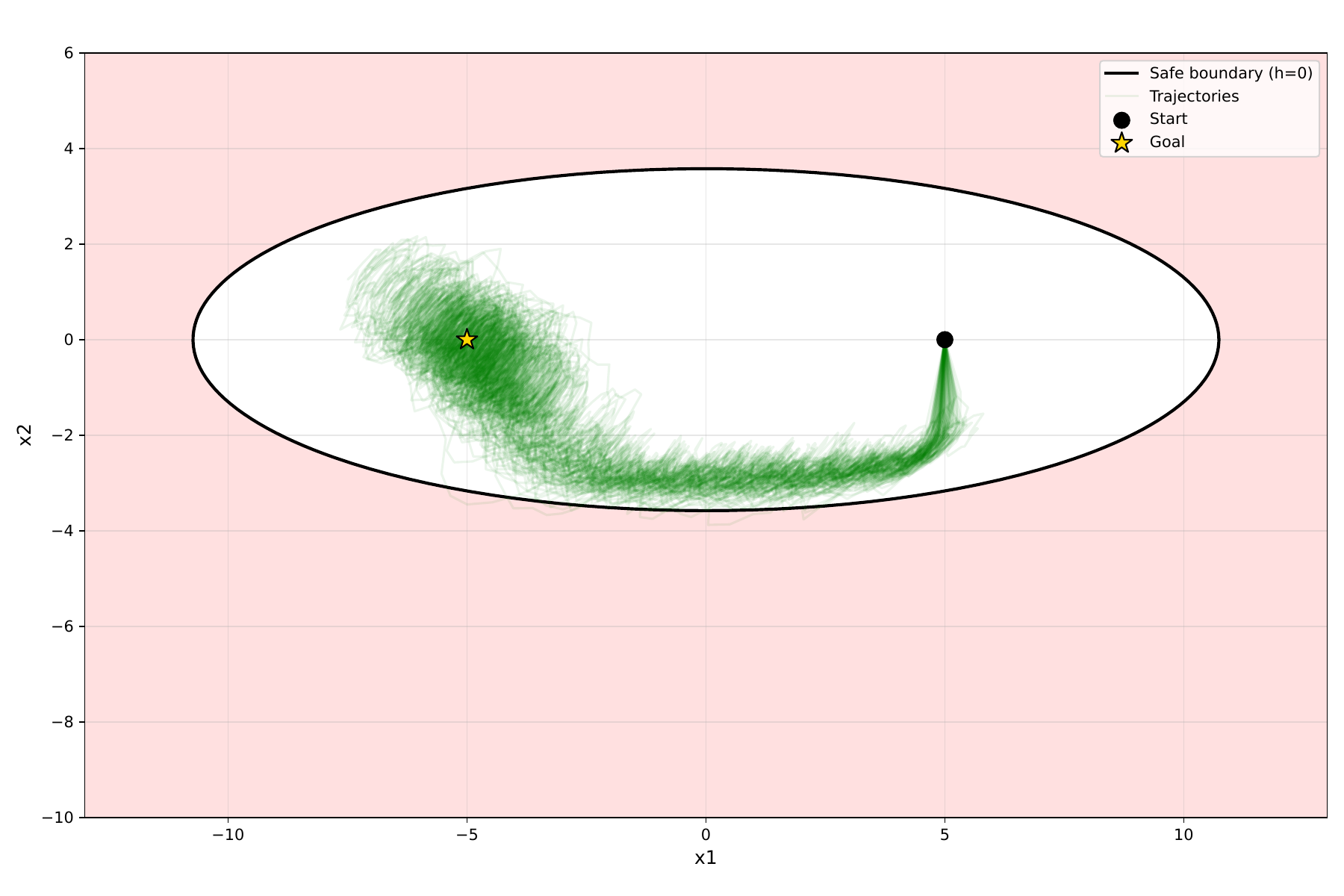}
        \caption{Ours}
        \label{fig:cvar-oval}
    \end{subfigure}

    \centering
    \begin{subfigure}[b]{0.24\textwidth}
        \centering
        \includegraphics[width=\textwidth]{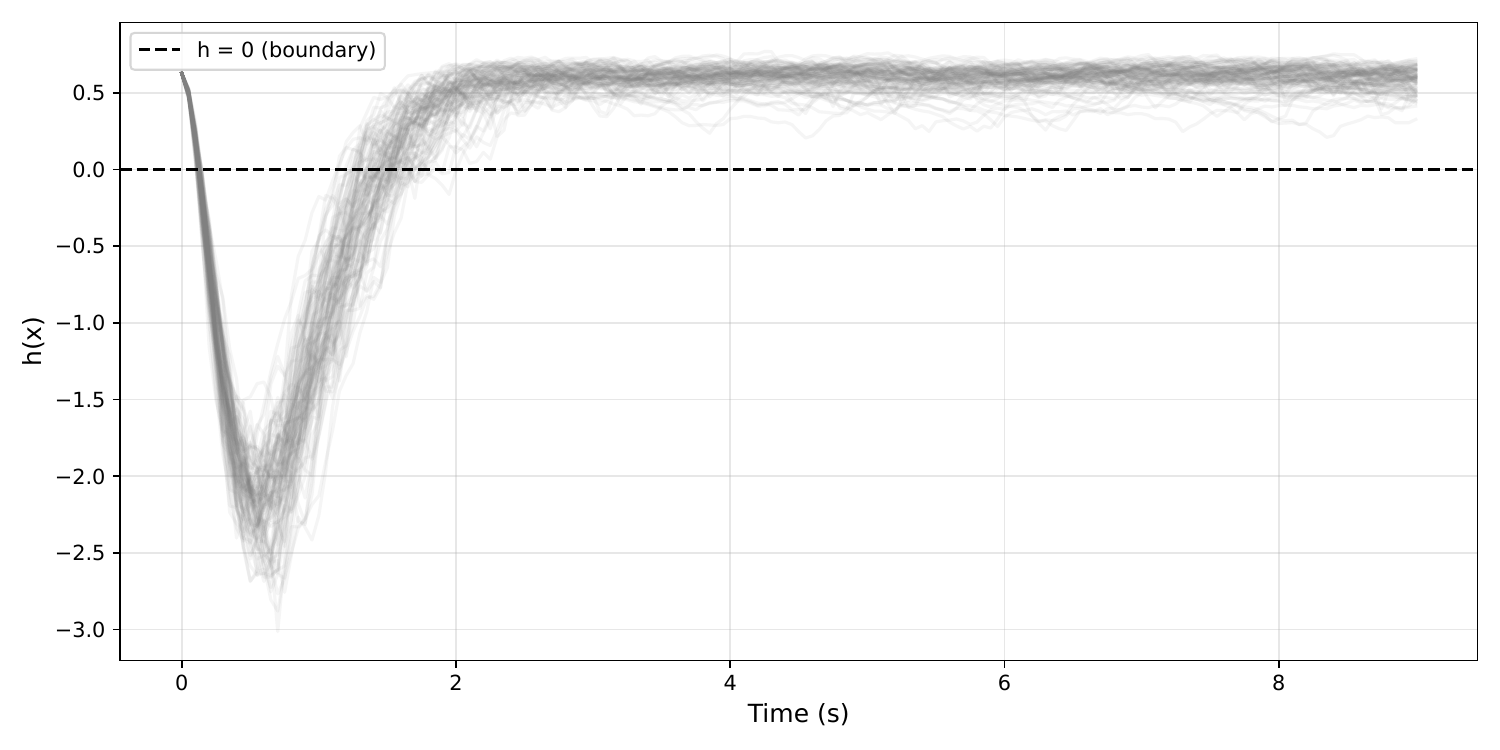}
        \caption{$h(x_k)$: Nominal}
        \label{fig:nominal-oval-h}
    \end{subfigure}
    \hfill
    \begin{subfigure}[b]{0.24\textwidth}
        \centering
        \includegraphics[width=\textwidth]{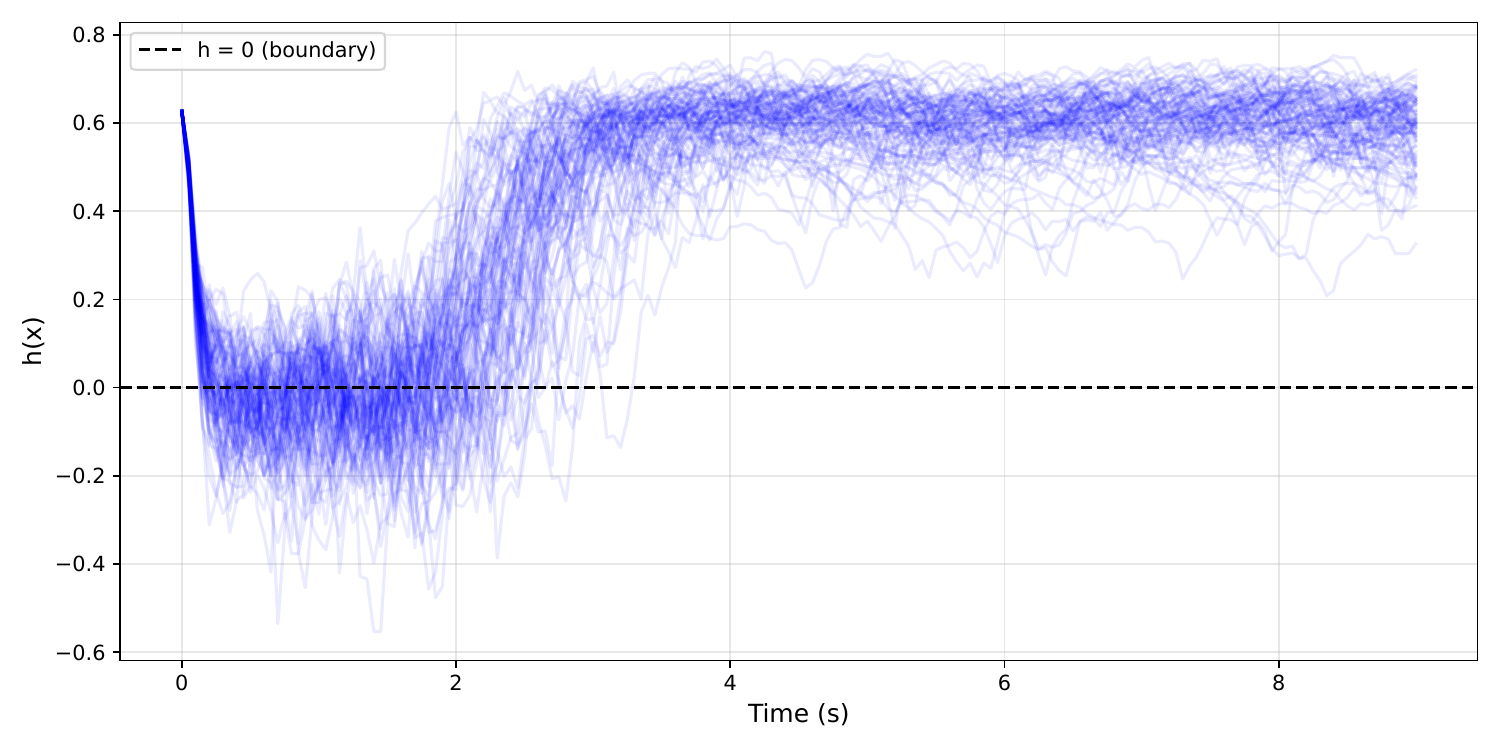}
        \caption{$h(x_k)$: Nominal CBF}
        \label{fig:nominal_cbf-oval-h}
    \end{subfigure}
    \hfill
    \begin{subfigure}[b]{0.24\textwidth}
        \centering
        \includegraphics[width=\textwidth]{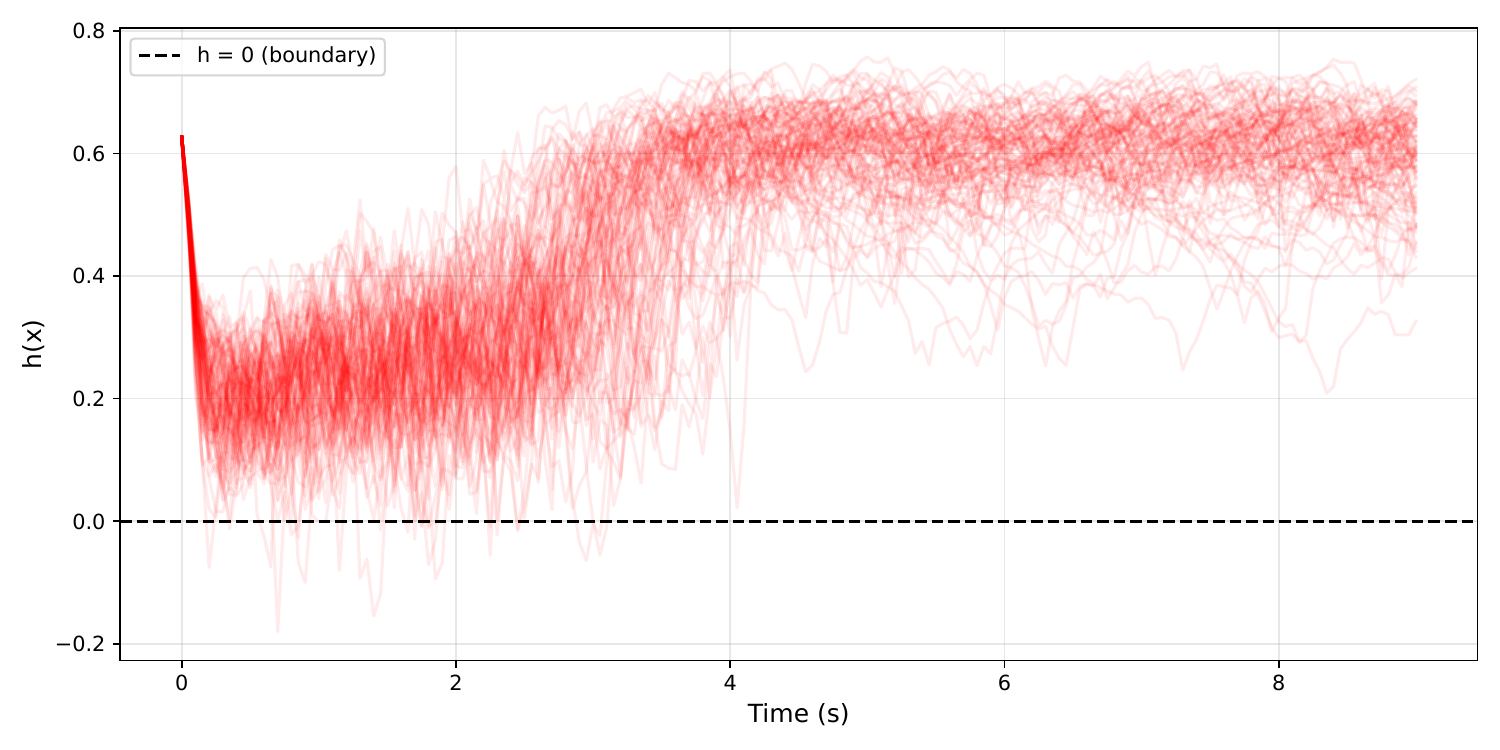}
        \caption{$h(x_k)$: Kalman-CVaR-CBF}
        \label{fig:scbf-oval-h}
    \end{subfigure}
    \hfill
    \begin{subfigure}[b]{0.24\textwidth}
        \centering
        \includegraphics[width=\textwidth]{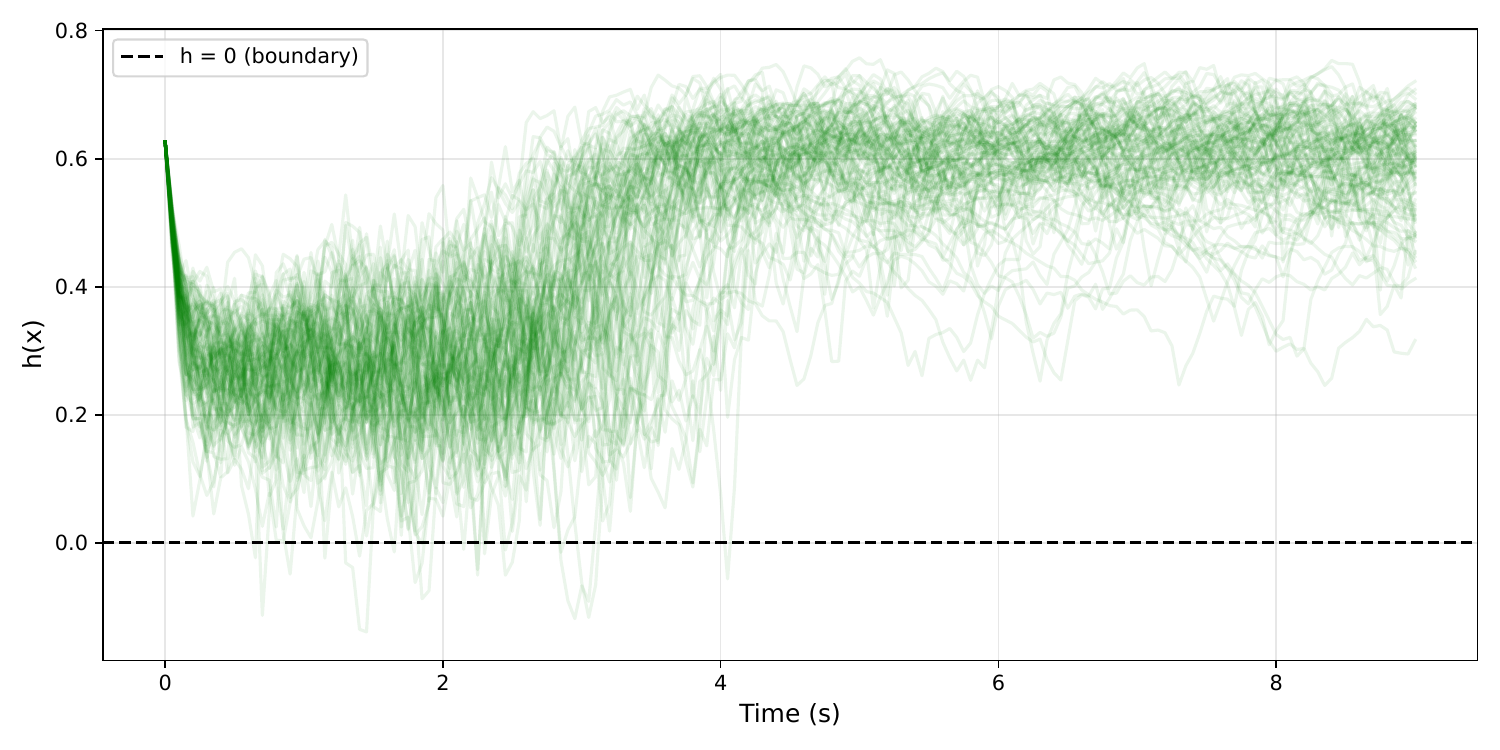}
        \caption{$h(x_k)$: Ours}
        \label{fig:cvar-oval-h}
    \end{subfigure}
    \caption{Closed-loop state trajectories and  evolution of the safety function $h(x_k)$ over 100 Monte Carlo trials under the ellipsoidal safety constraint. The empirical safety probabilities of the four controllers are $0$, $0$, $0.8$, and $0.8$, respectively.}
    \label{fig:oval_all}
\end{figure*}

\begin{figure}
    \centering
    \includegraphics[width=\linewidth]{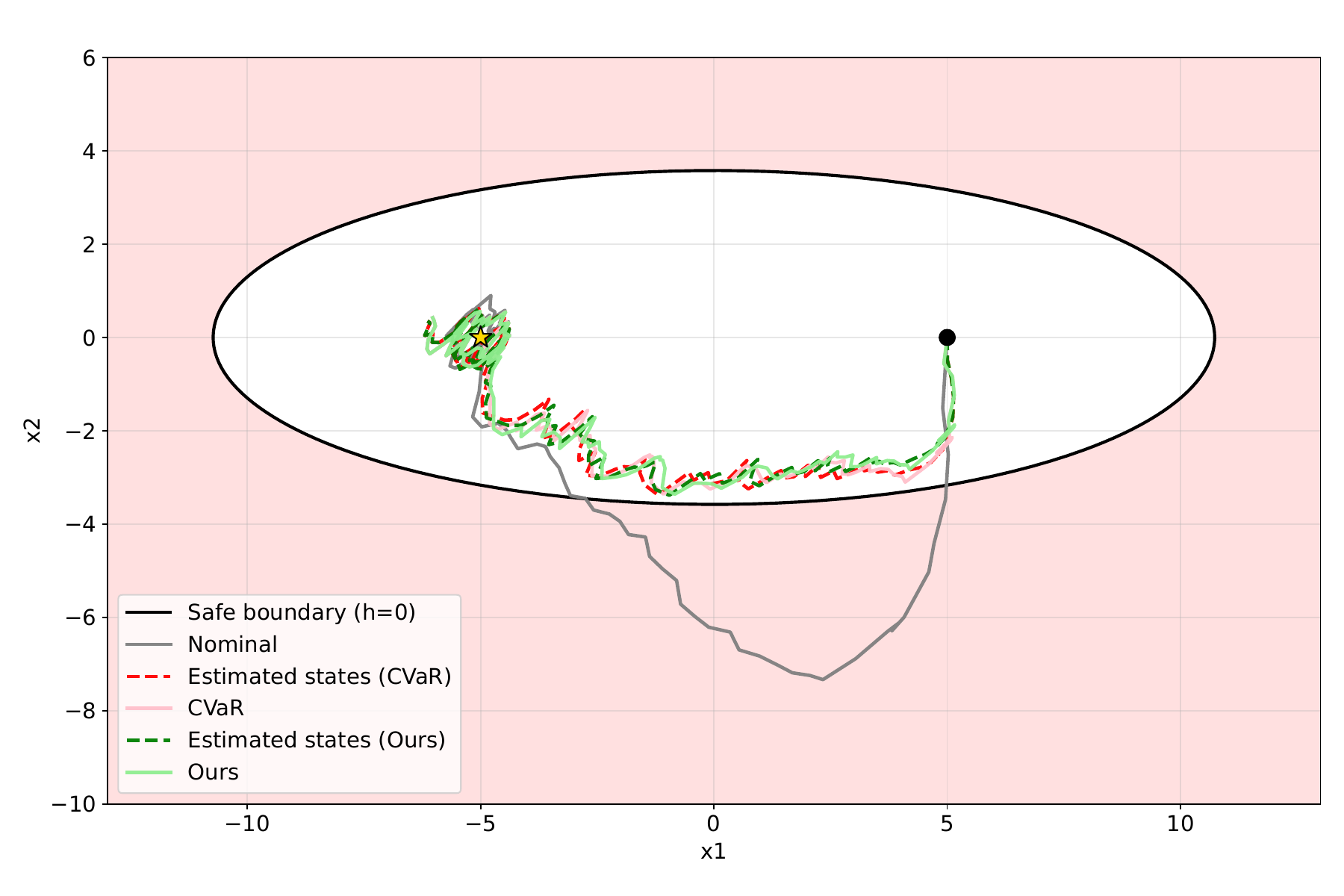}
    \caption{One realization of the closed-loop trajectories by the two methods under the ellipsoidal safety constraint. Both methods achieve the same empirical safety probability $\hat{P}_s=0.8$, with parameters $\alpha=0.52$ and $\epsilon=0.27$ for the method in \cite{kishida2024risk}, and $\alpha=0.52$ and $c_J'=0.38\times c_J^{\max}$ for our approach.}
    \label{fig:oval-comparison}
\end{figure} 

Then we set the parameters in our approach as $\alpha=0.7$ and $c_J'=0.115\times c_{J}^{{\max}}$,
where \vspace{-4pt}
\begin{flalign}\label{eq-cjmax}
    c_J^{\max}&:=(M-h_\gamma)(1-\alpha)+\frac{\lambda_{\max}}{2}\text{tr}(K_k\text{cov}[v_k]K_k^\top)\notag\\
    &~~~+\frac{\lambda_{\max}}{2}\text{tr}(K_kC_kP_kC_k^\top K_k^\top). \vspace{-4pt}
\end{flalign}
With this choice, our approach achieves the same empirical safety probability $\hat{P}_s=0.76$ as \cite{kishida2024risk}. This indicates that our approach can attain the same level of safety performance as \cite{kishida2024risk}. The resulting system trajectories, together with the corresponding Kalman filter-based state estimation trajectories obtained by the two approaches, are shown in Figure~\ref{fig:halfplane-comparison}.

Next, we further tune the parameters in both approaches so that they achieve the same empirical safety probability $\hat{P}_s$ among $\{0.4,0.6,0.76,0.8,0.9,0.99\}$ over 100 Monte Carlo realizations of stochastic disturbances and noises, and record the average runtime for each case. Specifically, for our approach, we tune $\alpha\!\in\!(0,1)$ and $k_J\!\in\![0,1]$ in \vspace{-3pt}
\[
c_J'=k_J\cdot c_J^{\max}, \vspace{-3pt}
\]
while for  \cite{kishida2024risk}, we tune $\alpha$ and $\epsilon$. The resulting parameters and runtimes are reported in Table~\ref{tab:halfplane}. We observe that, under the half-plane safety constraint, our approach achieves the same empirical safety probability as the method in \cite{kishida2024risk}, while maintaining a comparable level of computational time.

Furthermore, with $\alpha=0.7$, we also compare our approach with a \textbf{nominal CBF} controller defined by \vspace{-3pt}
\begin{equation}
    \hat{h}(A_k\hat{x}_k+B_ku_k)\geq\alpha \hat{h}(\hat{x}_k), \vspace{-3pt}
\end{equation}
with $\hat{h}$ defined in \eqref{eq-hhat}, as well as  the \textbf{nominal controller} without any safety constraint. For the method in \cite{kishida2024risk}, we use the same parameters $\alpha=0.7$ and $\epsilon=0.3$ as before.
We again perform 100 Monte Carlo trials to estimate the empirical safety probability of each controller. The resulting system trajectories, together with the evolution of the safety function $h(x_k)$, are shown in Figure~\ref{fig:halfplan_all}. The empirical safety probabilities of the four controllers are $0$, $0.05$, $0.76$, $0.76$, respectively. We observe that, compared with the nominal controller and the nominal CBF controller, both our approach and  \cite{kishida2024risk} significantly improve the safety performance.

\begin{table}[t]
  \centering
  \caption{Statistical Results for Half-Plane Constraint}
  \label{tab:halfplane}
  \setlength{\tabcolsep}{6pt}  
  \begin{tabular}{@{}cccccccc@{}}
    \toprule
    \multirow{2}{*}{Method} & \multirow{2}{*}{Metric} & \multicolumn{6}{c}{ Monte Carlo estimates $\hat{P}_s$} \\
    \cmidrule(l){3-8}
    & & 0.4 & 0.6 & 0.76 & 0.8 & 0.9 & 0.99 \\
    \midrule
    \multirow{3}{*}{\cite{kishida2024risk}} 
      & $\alpha$   & 0.55  & 0.7 & 0.7  & 0.7   & 0.75  & 0.8  \\
      & $\epsilon$ & 0.4 & 0.38 & 0.3 & 0.28 & 0.28 & 0.27 \\
      & time (ms)  & 9.00 & 7.43 & 10.21  & 7.87  & 7.71 & 7.64 \\
    \midrule
    \multirow{3}{*}{ours} 
      & $\alpha$   & 0.55  & 0.7 & 0.7  & 0.7   & 0.75  & 0.8  \\
      & $k_J$      & 0.077 & 0.102 & 0.115 & 0.125 & 0.14 & 0.17 \\
      & time (ms)  & 13.67  & 11.17 & 15.22  & 11.58  & 11.09 & 11.16 \\
    \bottomrule
  \end{tabular} 
\end{table}

\textbf{Comparisons under the ellipsoidal safety constraint.} We then consider the following ellipsoidal safety set  \vspace{-3pt}
\begin{equation}
    \mathcal{C}_2=\{x\in\RR^2: 0.8- x^\top\begin{bmatrix}
        1/144& 0\\
        0& 1/16
    \end{bmatrix}x\geq0 \}. \vspace{-3pt}\notag
\end{equation}
In this setting, we increase the covariances of stochastic disturbances to \vspace{-3pt}
\begin{equation}
    \cov[w_k]=\begin{bmatrix}
        0.03 & 0.03\\
        0.03 & 0.03
    \end{bmatrix},~\cov[v_k]=0.09.\notag \vspace{-3pt}
\end{equation}
The initial and target states are $x_0\!=\!\begin{bmatrix}
    5&0
\end{bmatrix}^\top$, $x_g\!=\!\begin{bmatrix}
    -5&0
\end{bmatrix}^\top$, respectively. As the nominal controller, we choose an LQR controller of the form\vspace{-3pt}
\begin{equation}
    u(x)=-K_{\text{lqr}}\left( x-x_g \right),~Q_{\text{lqr}}=\mathrm{diag}(1.0,0.5),~R_{\text{lqr}}=0.1.\vspace{-3pt}\notag
\end{equation}

Similarly to the previous set of simulations, we further tune the parameters in the two approaches so that they achieve the same empirical safety probability among $\{0.4,0.6,0.7,0.8,0.9,0.99\}$ over 100 Monte Carlo realizations of the stochastic disturbances and measurement noises. One realization of the resulting closed-loop trajectories generated by the two methods is shown in Figure~\ref{fig:oval-comparison}, and the corresponding parameters and average runtimes are reported in Table~\ref{tab:oval}.
We also compare the proposed approach with $\alpha=0.52$ and $k_J=0.38$ against the nominal controller, the nominal CBF controller with $\alpha=0.52$, and the method in \cite{kishida2024risk} with $\alpha=0.52$ and $\epsilon=0.27$. The resulting system trajectories, together with the evolution of the safety function $h(x_k)$, are shown in Figure~\ref{fig:oval_all}.

It is worth noting that, from Table~\ref{tab:oval}, we observe that, under the ellipsoidal safety constraint, our approach achieves the same empirical safety probability as \cite{kishida2024risk}, while requiring \textbf{substantially less computation time}.
This is mainly because the worst-case CVaR approach in \cite{kishida2024risk} involves a distributionally robust min-max structure. While this complexity can be simplified to a linear inequality for half-space safe sets, the ellipsoidal case leads to a quadratic-loss risk constraint together with auxiliary variables and additional inequalities in the online optimization. By contrast, our Jensen-based reformulation directly yields an explicit deterministic barrier constraint in terms of the control input.
As a result, the computational advantage of our approach becomes much more significant in the ellipsoidal case.

In summary, compared with the nominal controller and the nominal CBF controller, both our approach and \cite{kishida2024risk} can achieve substantially higher safety probabilities. Furthermore, compared with \cite{kishida2024risk}, our approach can attain a comparable level of safety performance with significantly improved computational efficiency, especially when the safety constraint becomes more complex. In fact, \cite{kishida2024risk} only considers half-plane and ellipsoidal safety constraints, whereas our approach can be applied to more general safety specifications, as long as the control barrier function $h$ satisfies conditions i)--iii) in Theorem~\ref{thm}. This demonstrates that our approach is not only computationally efficient, but also more broadly applicable for safe control of discrete-time stochastic systems with Kalman filtering-based state estimation.

\begin{table}[t]
  \centering
  \caption{Statistical Results for Ellipsoidal Constraint}
  \label{tab:oval}
  \setlength{\tabcolsep}{3pt}  
  \begin{tabular}{@{}cccccccc@{}}
    \toprule
    \multirow{2}{*}{Method} & \multirow{2}{*}{Metric} & \multicolumn{6}{c}{ Monte Carlo estimates $\hat{P}_s$} \\
    \cmidrule(l){3-8}
    & & 0.4 & 0.6 & 0.7 & 0.8 & 0.9 & 0.99 \\
    \midrule
    \multirow{3}{*}{\cite{kishida2024risk}} 
      & $\alpha$   & 0.3  & 0.45 & 0.5  & 0.52   & 0.6  & 0.8  \\
      & $\epsilon$ & 0.29 & 0.3 & 0.297 & 0.27 & 0.25 & 0.28 \\
      & time (ms)  & 2878.1 & 3253.9 & 2734.7  & 2839.1  & 3005.1 & 4981.2 \\
    \midrule
    \multirow{3}{*}{ours} 
      & $\alpha$   & 0.3  & 0.45 & 0.5  & 0.52  & 0.6  & 0.8  \\
      & $k_J$      & 0.258 & 0.31 & 0.353 & 0.38 & 0.44 & 0.7 \\
      & time (ms)  & 44.71  & 57.84 & 47.10  & 46.18  & 53.86 & 59.76 \\
    \bottomrule
  \end{tabular}
\end{table}

\subsection{Validation of Probabilistic Safety Guarantee}

In this subsection, we evaluate the theoretical probabilistic safety bound provided by Theorem~\ref{thm} against the empirical safety probability obtained from Monte Carlo simulations. Since our approach is developed for linear systems, we adopt the underactuated inverted-pendulum example in \cite{cosner2023robust} and linearize its dynamics around the upright equilibrium by approximating $\sin(\theta_k)$ with $\theta_k$ within the range $|\theta_k|\leq \pi/6$. Letting $x_t:=\begin{bmatrix}
    \theta_k&\dot{\theta}_k
\end{bmatrix}^\top$, the linearized system is given by \vspace{-3pt}
\begin{subequations}
    \begin{align}
        x_{k+1}&=\begin{bmatrix}
            1& \Delta t\\
            \Delta t & 1
        \end{bmatrix}x_k+\begin{bmatrix}
            0\\
            \Delta t
        \end{bmatrix}u_k+w_k\\
        y_k&=\begin{bmatrix}
            1 & 0
        \end{bmatrix}x_k+v_k \vspace{-3pt}
    \end{align}
\end{subequations}
where we use the same disturbance covariance as in \cite{cosner2023robust} and set the measurement noise covariance as \vspace{-3pt}
\begin{equation}
    \cov[w_k]=\mathrm{diag}(0.005^2,0.025^2),~\cov[v_k]=0.005^2. \vspace{-3pt}\notag
\end{equation}
As the nominal controller, we choose an LQR controller of the form \vspace{-3pt}
\begin{equation}
    u(x)=-K_{\text{lqr}}x,~Q_{\text{lqr}}=\mathrm{diag}(12,1),~R_{\text{lqr}}=0.2. \vspace{-3pt}\notag
\end{equation}
We define the same safety set as in \cite{cosner2023robust}, \vspace{-2pt}
\begin{equation}
    \mathcal{C}=\left\{ x\in\RR^2: 1-\frac{6^2}{\pi^2}x^\top\begin{bmatrix}
        1&3^{-\frac{1}{2}}\\
        3^{-\frac{1}{2}}&1
    \end{bmatrix}x\geq 0 \right\}. \vspace{-2pt}\notag
\end{equation}

We first reproduce the Jensen-based CBF controller in \cite{cosner2023robust} under perfect state information, using the same parameters $c_J=\frac{\lambda_{\max}}{2}\text{tr}(\text{cov}(w_k))$ and $\alpha=1-c_J$, and perform 100 Monte Carlo trials for different $x_0$. For each $x_0$, we compute both the theoretical safety bound given by Lemma~\ref{lem-expectationcbf} and the  Monte Carlo estimate of the safety probability.

For our output-feedback approach, we use the same value $\alpha=1-c_J$ and set $c_J'=0.2\times c_J^{\max}$, where $c_J^{\max}$ is defined in \eqref{eq-cjmax}. For Kalman filtering, we choose $\sigma=0.05$. We then again perform 100 Monte Carlo trials for each initial state and compute the corresponding theoretical safety bound from Theorem~\ref{thm} together with the empirical safety estimate.
Figure~\ref{fig:oval_heat_all} presents the closed-loop trajectories starting from the initial state $x_0=0$, together with the theoretical safety guarantees and the Monte Carlo estimates of the safety probability from different initial states $x_0$. 

From Figure~\ref{fig:oval_heat_all}, we observe that, compared with the state-feedback CBF controller in \cite{cosner2023robust} under perfect state information, our approach yields a slightly lower theoretical safety probability due to the additional estimation error introduced by Kalman filtering and its  probability guarantee $1-\sigma$. Nevertheless, the Monte Carlo estimates show that both the state-feedback controller \cite{cosner2023robust} and our approach provide reliable characterizations of the actual safety performance, in the sense that the empirical safety probabilities are consistently higher than the corresponding theoretical safety bounds.

\begin{figure}[htbp]
      \centering
      \begin{subfigure}[b]{0.49\linewidth}
          \centering
          \includegraphics[width=0.9\textwidth]{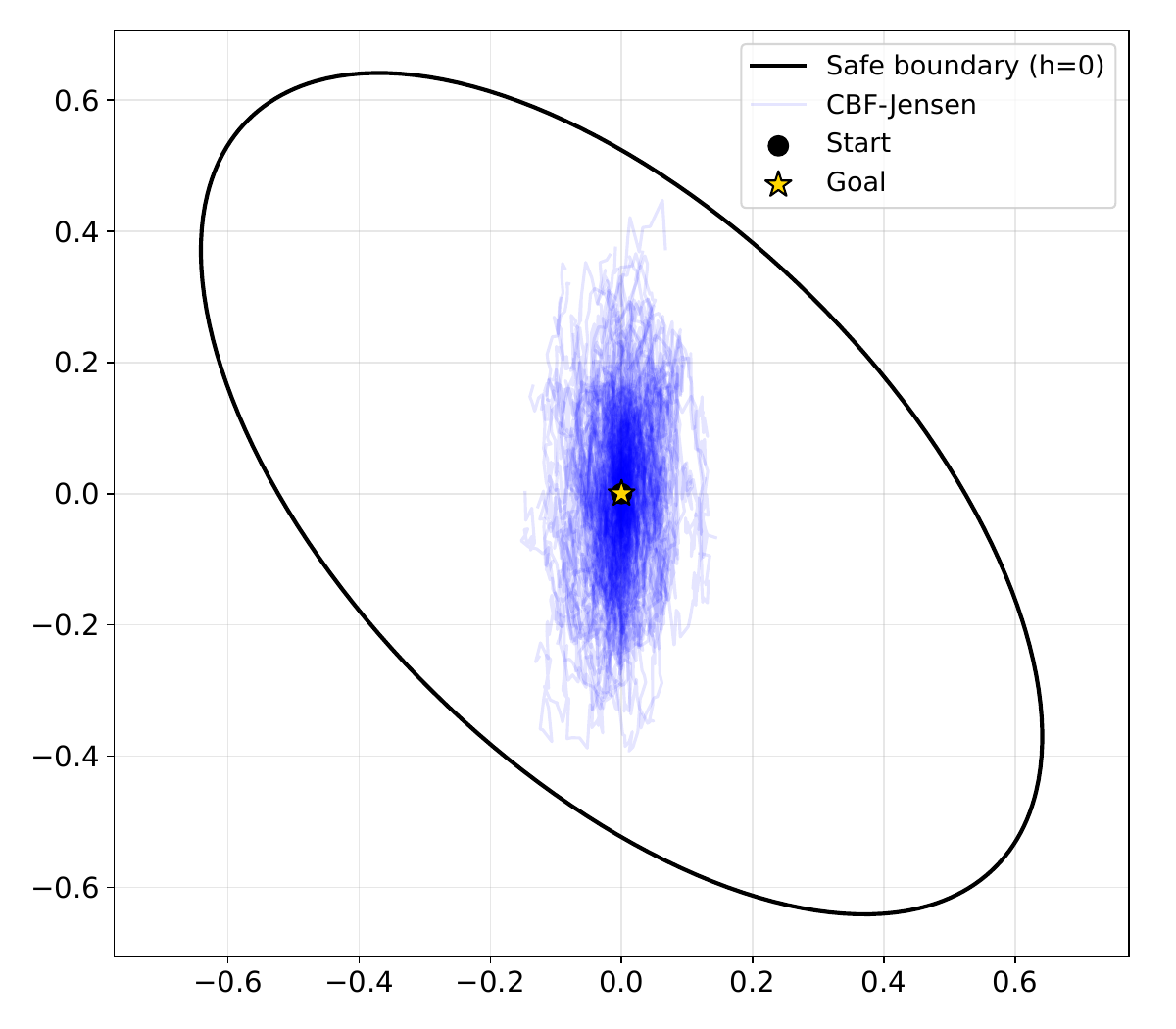}
          \caption{State-feedback trajectories}
          \label{fig:jensen-traj}
      \end{subfigure}
      \hfill
      \begin{subfigure}[b]{0.49\linewidth}
          \centering
          \includegraphics[width=0.9\textwidth]{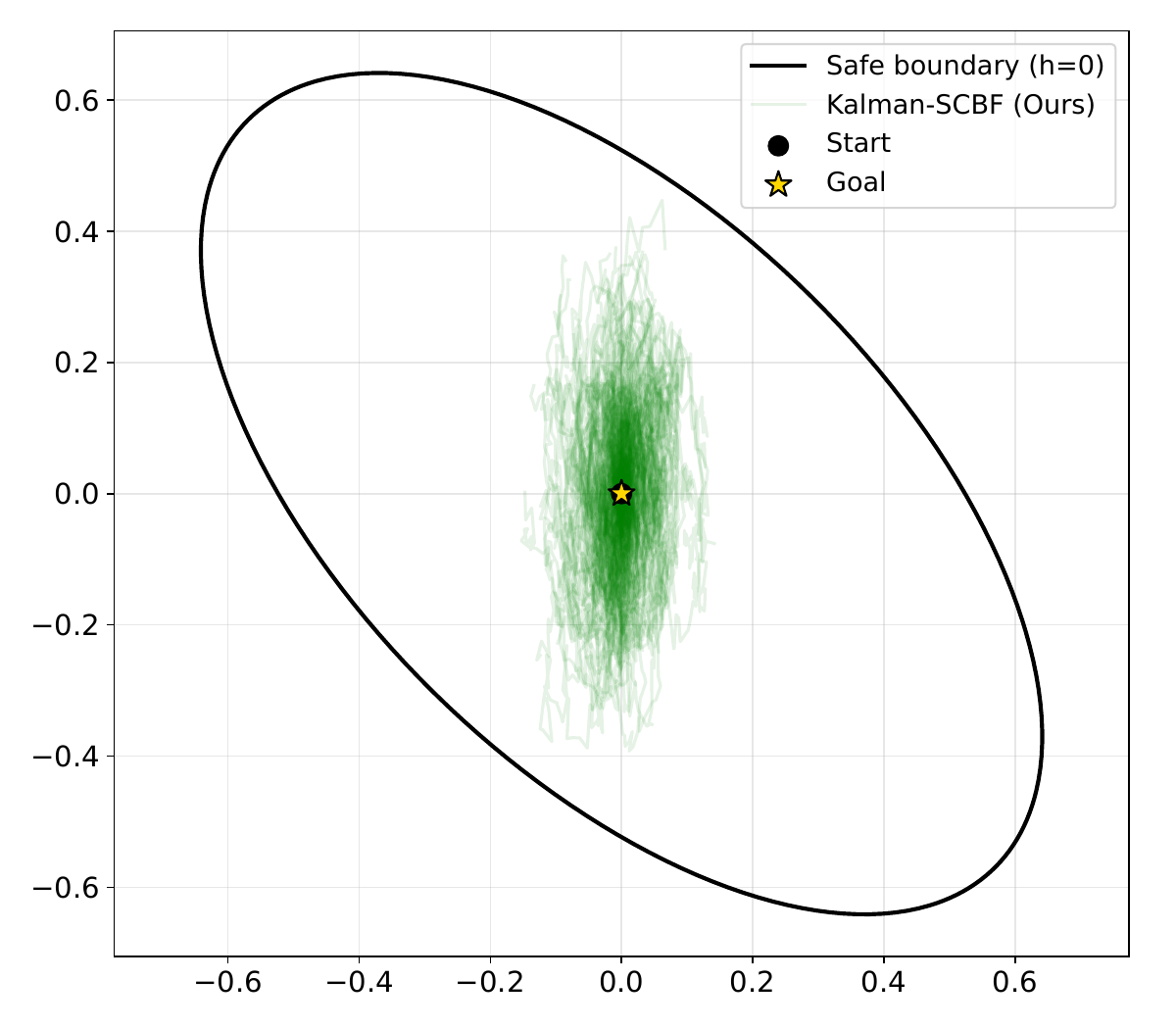}
          \caption{Output-feedback trajectories}
          \label{fig:scbf-traj}
      \end{subfigure}


      \begin{subfigure}[b]{0.49\linewidth}
          \centering
          \includegraphics[width=\textwidth]{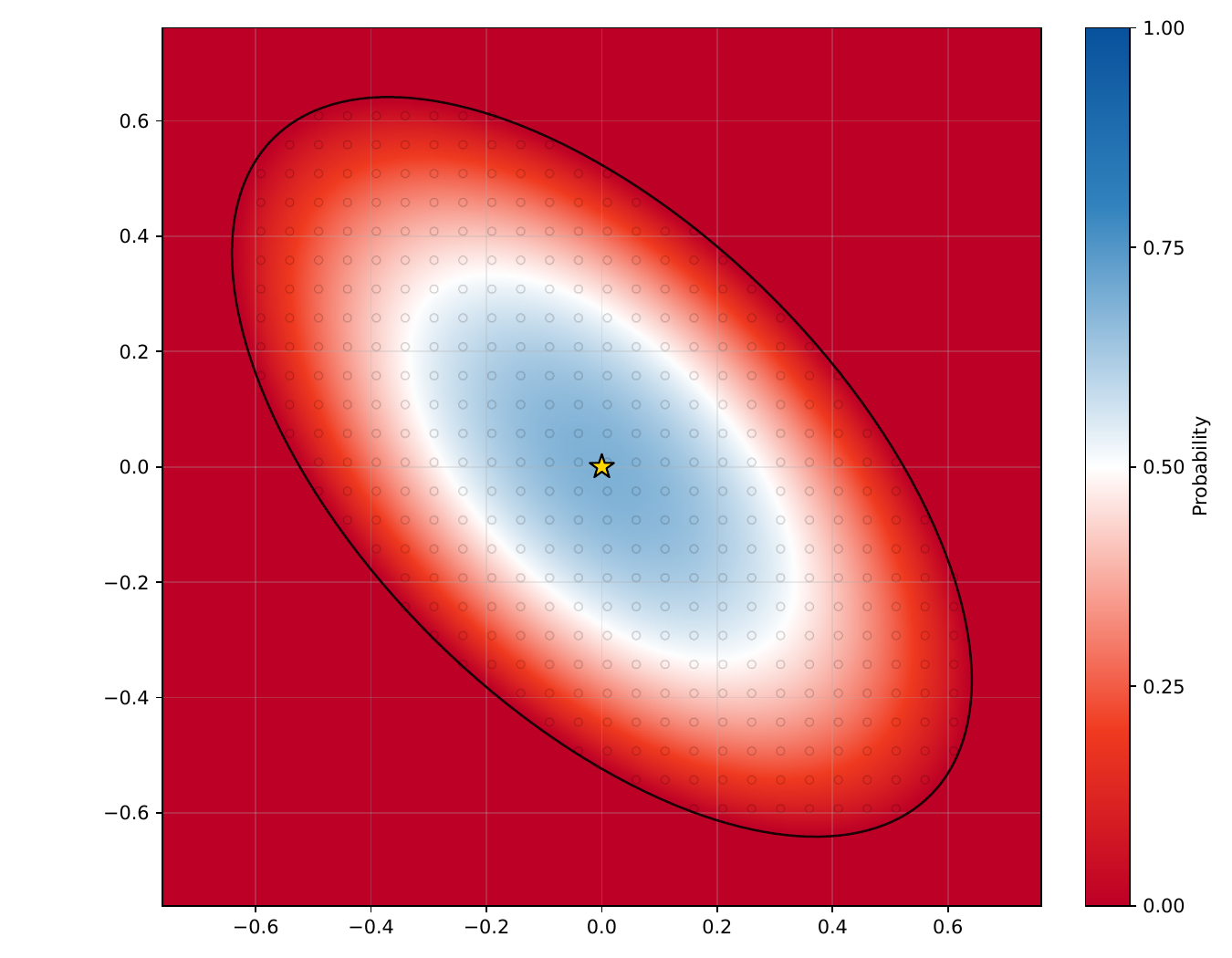}
          \caption{State-feedback bound}
          \label{fig:jensen-heat}
      \end{subfigure}
      \hfill
      \begin{subfigure}[b]{0.49\linewidth}
          \centering
          \includegraphics[width=\textwidth]{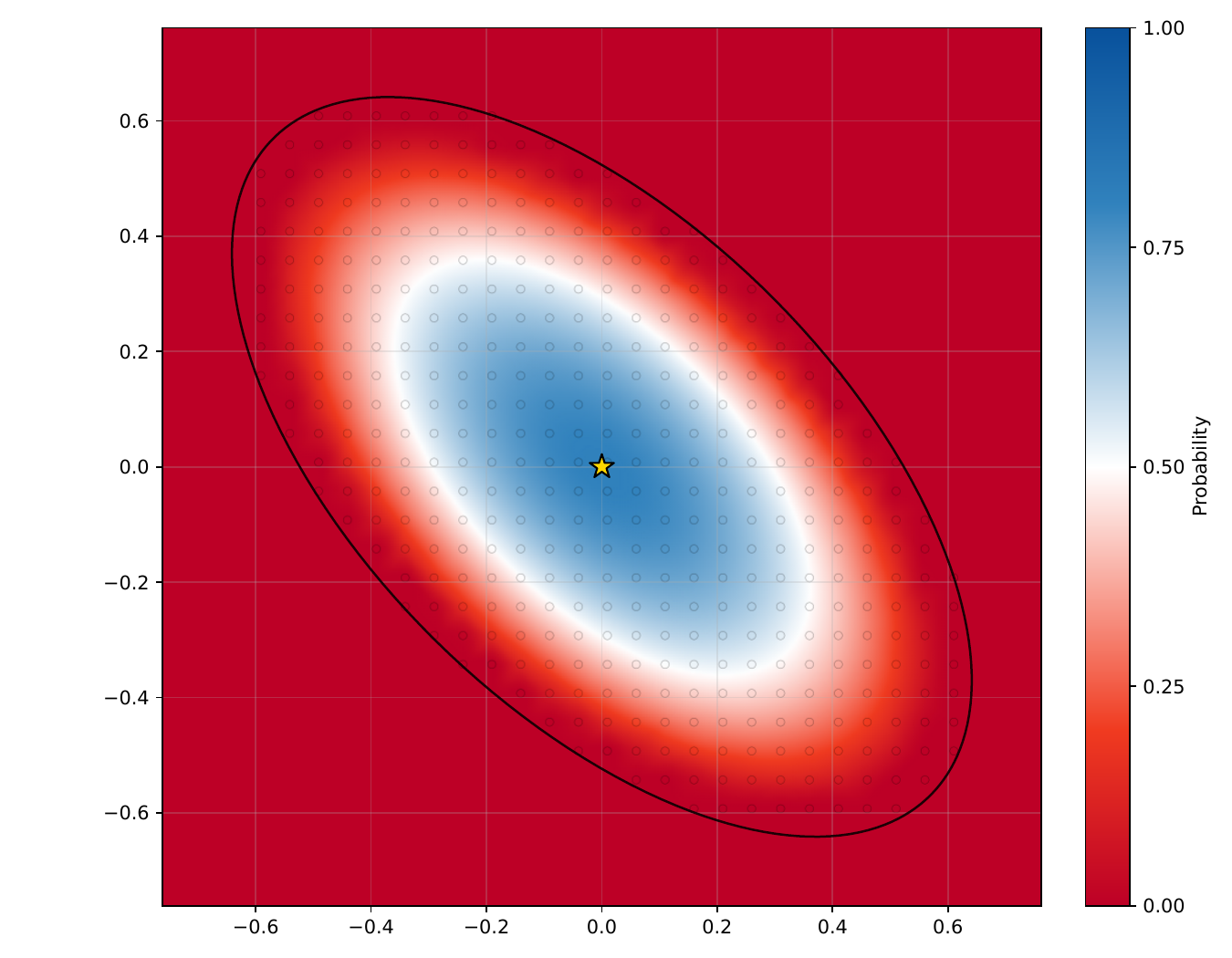}
          \caption{Output-feedback bound}
          \label{fig:scbf-heat}
      \end{subfigure}


      \begin{subfigure}[b]{0.49\linewidth}
          \centering
          \includegraphics[width=\textwidth]{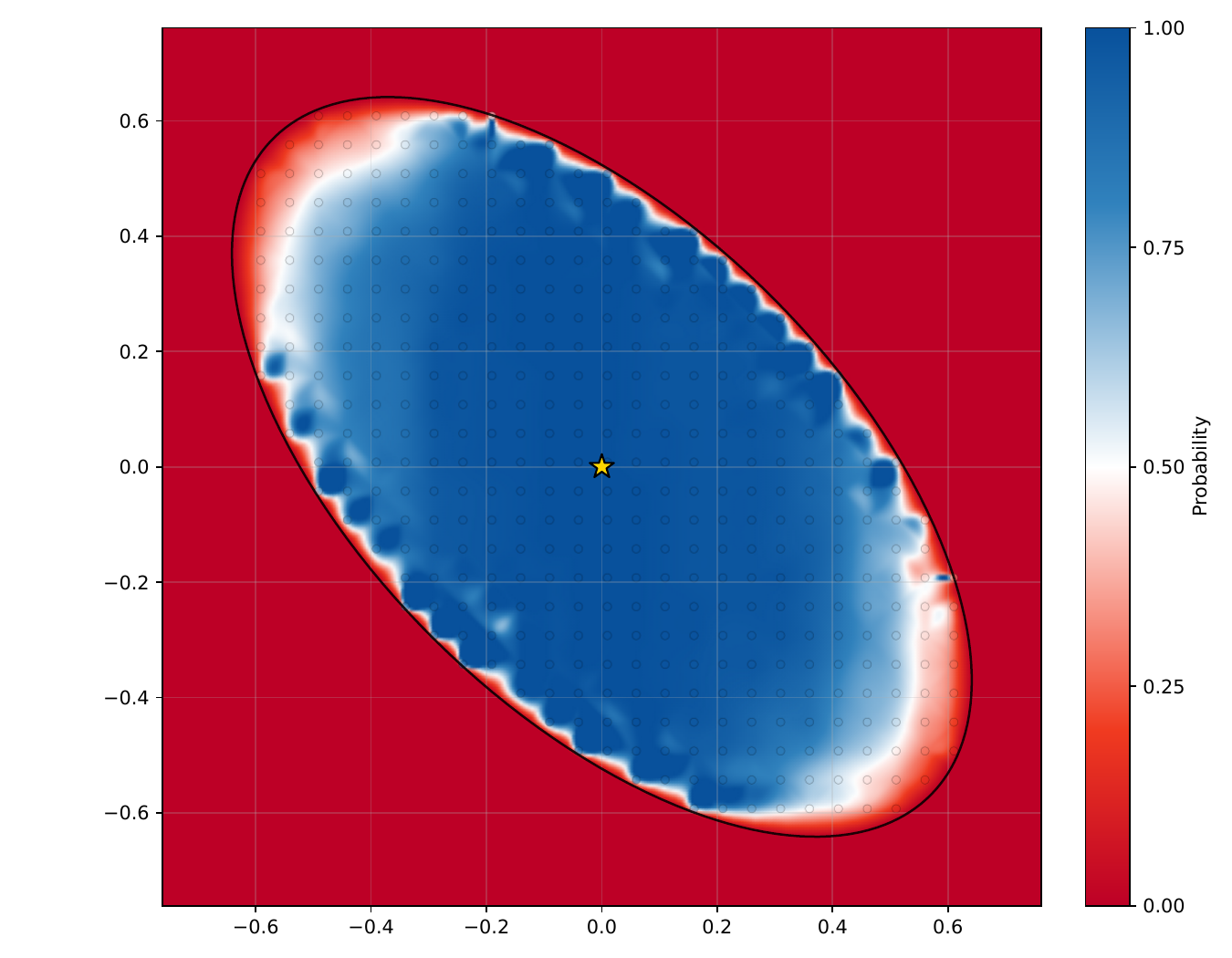}
          \caption{State-feedback estimate}
          \label{fig:jensen-real-heat}
      \end{subfigure}
      \hfill
      \begin{subfigure}[b]{0.49\linewidth}
          \centering
          \includegraphics[width=\textwidth]{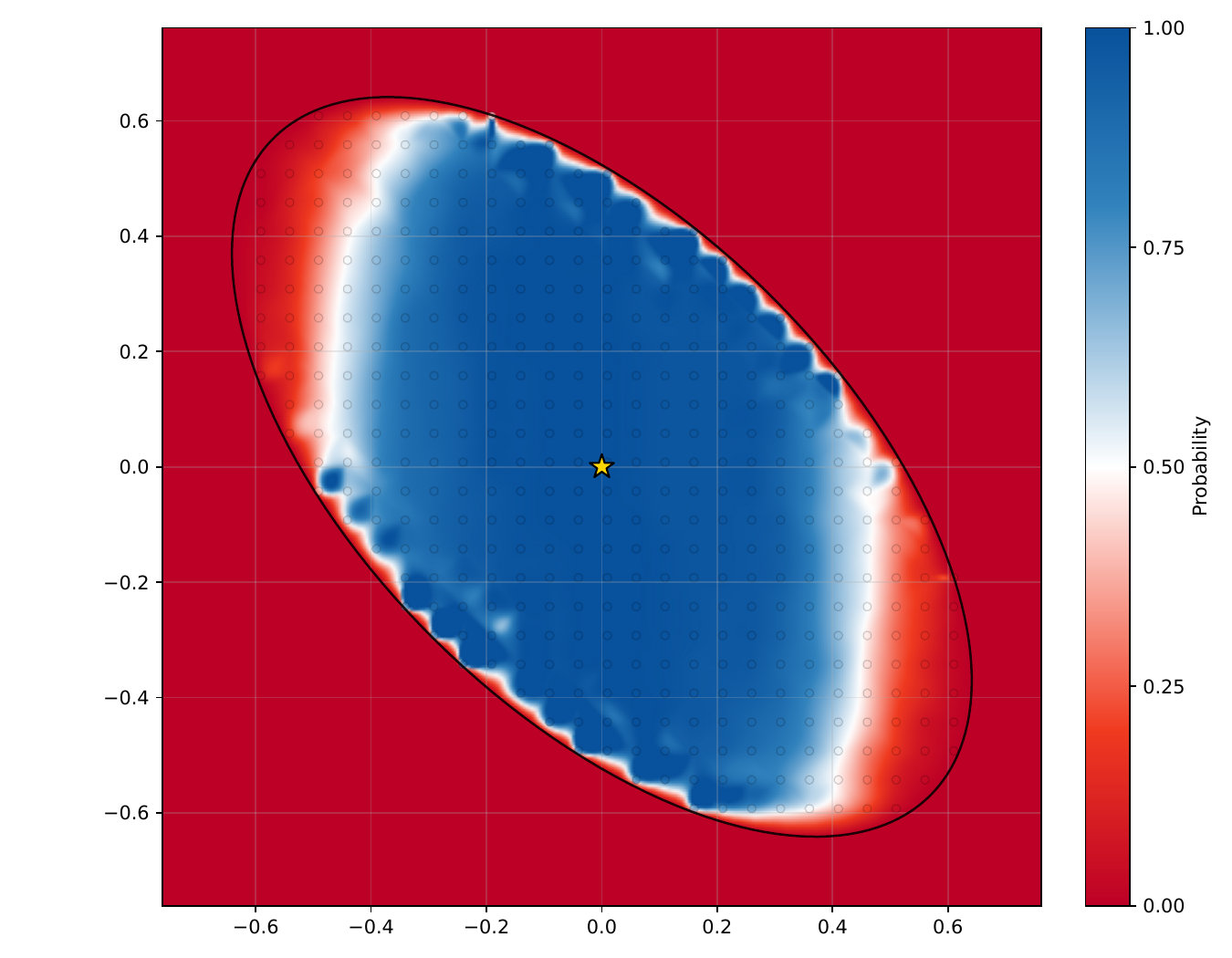}
          \caption{Output-feedback estimate}
          \label{fig:scbf-real-heat}
      \end{subfigure}

      \caption{Comparison between the state-feedback controller and our approach. The first row shows system trajectories over 100 Monte Carlo trials starting from $x_0=0$. The second row shows the theoretical safety probability bounds over different initial states, and the third row shows the corresponding Monte Carlo estimates of the safety probability. }
      \label{fig:oval_heat_all}
\end{figure}

\section{Conclusion}\label{sec:conclusion}

In this paper, we study the safe control for discrete-time stochastic systems with incomplete state information. By integrating Kalman filtering into the stochastic control barrier function framework, we construct an estimate-based safety function and develop a Jensen-based deterministic reformulation that yields explicit finite-horizon probabilistic safety guarantees for the resulting output-feedback controller. Simulation results demonstrated that our approach achieves reliable safety performance, and can attain comparable empirical safety probabilities with improved computational efficiency, especially for more complex safety constraints. 
Future work will consider extensions of the proposed framework to model predictive control schemes and its application to more complex robotic systems.

\bibliographystyle{IEEEtran}
\bibliography{myref}

\end{document}